\newtheorem{theorem}{Theorem}
\newtheorem{lemma}[theorem]{Lemma}
\theoremstyle{remark}
\newtheorem{example}{Example}
\newtheorem{remark}{Remark}
\newtheorem{condition}{Condition}
\newcommand{\Cov}{\operatorname{Cov}}
\renewcommand{\P}{\mathds P}
\newcommand{\E}{\mathds E}
\newcommand{\R}{\mathbb R}
\newcommand{\N}{\mathbb N}
\newcommand{\diff}{~\!\mathrm d}
\DeclareMathOperator{\GPD}{GPD}
\DeclareMathOperator{\indic}{{\mathds 1}}
\begin{document}

\title{Extrapolating into the Extremes with Minimum Distance Estimation}   
\author{
Alexis Boulin\thanks{Ruhr-Universität Bochum, Fakultät für Mathematik. Email: \href{mailto:alexis.boulin@rub.de}{alexis.boulin@rub.de}} \orcidlink{0000-0003-0548-2726
}
\and
Erik Haufs\thanks{Ruhr-Universität Bochum, Fakultät für Mathematik. Email: \href{mailto:erik.haufs@rub.de}{erik.haufs@rub.de}} \orcidlink{0009-0008-8194-7445}
}

\date{\today}

\maketitle
\begin{abstract}
    Understanding complex dependencies and extrapolating beyond observations are key challenges in modeling environmental space-time extremes. To address this, we introduce a simplifying approach that projects a wide range of multivariate exceedance problems onto a univariate peaks-over-threshold problem. In this framework, an estimator is computed by minimizing the $L_2$-distance between the empirical distribution function of the data and the theoretical distribution of the model. Asymptotic properties of this estimator are derived and validated in a simulation study. We evaluated our estimator in the EVA (2025) conference Data Challenge as part of Team Bochum's submission. The challenge provided precipitation data from four runs of LENS2, an ensemble of long-term weather simulations, on a $5 \times 5$ grid of locations centered at the grid point closest to Asheville, NC. Our estimator achieved a top-three rank in two of six competitive categories and won the overall preliminary challenge against ten competing teams.
\end{abstract}

\textit{Keywords.} Extreme value analysis,
Peaks-over-threshold,
Minimum distance estimation,
Spatio-temporal extremes,
Robust inference,
Environmental statistics
\smallskip
\smallskip
\smallskip

\noindent\textit{MSC subject classifications.} 
Primary
62G32, 
62G05, 

Secondary
62F12, 
62P12. 

\tableofcontents

\section{Introduction}\label{sec:Intro}

Extreme value statistics is frequently concerned with the problem of estimating some extreme occurrence probability. Classical models are often limited to either a spatial \textit{or} a temporal analysis of an extreme event.

In environmental statistics, however, the interest lies in the spatio-temporal extent of an extreme event: for example, the severeness of a flooding is influenced by the cumulative precipitation over different time scales (hourly to several days) as well as the regional spread of extreme rain. The tasks in the EVA 2025 data challenge try to capture such complexities.
All posed tasks may be phrased in the quite general setting of estimating exceedance probabilities. They have been designed in a way to incorporate multivariate methods, targeting at the simultaneous exceedance probability of precipitation at multiple distinct locations on a grid, as well as capturing the temporal dependence structure by asking for exceedances over consecutive days. A detailed description of the challenge design is provided in \cite{EVA2025DataChallenge}. 
The data challenge consisted of a preliminary and a competition challenge, each subdivided into three target probabilities, which will be denoted by (P1) -- (P3) and (C1) -- (C3) in the following. The targets refer to a simulated data set of daily precipitation at 25 locations, organized into a $5\times 5$ grid. At each location, 165 years of daily observations are recorded; four simulation runs of this model were presented.      

Our contribution may be seen as two-fold, first methodological, and secondly, theoretical. On the methodological level, we provide a strategy how to project all, up to 25-dimensional, problems into a simplifying univariate framework, which we exploit to estimate the target probabilities. Within the univariate setting, an $L_2$-minimum distance estimation is employed to extrapolate from less extreme threshold exceedances to the target exceedance probabilities. Latter idea emerged from a diagnostic tool frequently used in the evaluation of multivariate extreme value methods to compare theoretical survival probabilities to empirical ones, see, e.g., \cite[Fig. 3]{kiriliouk2022estimating}, \cite[Fig. 7]{Li2024} or \cite[Fig. 4]{buritica2025modeling}.

Our proposed minimum distance approach directly fits a parametric model to the empirical survival probabilities and uses this model for extrapolation, similar to a peaks-over-threshold approach. Latter is widely applied throughout extreme value statistics, with GPD modeling dating back as far as \cite{Pickands1975} and with statistical properties studied by \cite{Balkema1974,Davison1984,Davison1990}, among many others.

Our major theoretical contribution lies in the introduction of an $L_2$ Minimum-Distance Estimator, for which we provide asymptotic theory as well as an exhaustive efficiency comparison to maximum likelihood estimation.

Minimum distance estimation strategies are widely used in different areas of statistics \cite{Drossos1980,Clarke1994,Ozturk1997}, increasingly popular in situations demanding robust estimation. Notable contributions to the $L_2$-MDE may be attributed to, for instance, \cite{Hettmansperger1994}. Even within extreme value statistics, they are a commonly applied tool, see for instance \cite{Berghaus2013, broadwater2009adaptive, Dietrich01011996, Yilmaz18112021}. In particular for the GPD, the MLE has been long criticized to yield non-robust estimates and attempts have been made to provide more robust estimation strategies, such as \cite{Jurez2004}.

Probably closest to our proposed approach is the work of \cite{Chen02102017}. However, they only minimize their distance on the GPD sample and do not minimize a distance related to the whole empirical survival functions. Besides, we prove asymptotic normality and provide explicit formulae for the limiting covariance matrix.

The remainder of this paper is organized as follows. In Section \ref{sec:meth}, we formalize the problems of the data challenge and detail the construction and properties of the estimators, including two ways for the construction of normal confidence intervals. These tools are then applied to the problems of the challenge in Section \ref{sec:application}, forming the core of this paper. 

\section{Methods}\label{sec:meth}

We consider the general task of estimating probabilities of rare spatio-temporal events in multivariate (precipitation) data. 

Let $(\Omega, \mathcal{A}, \mathbb{P})$ be a probability space. In this paper, all random variables are assumed to be defined on this space. Denote with $(\xi_t^{(\ell)})_{t}\subset \R^{d}, t=1,...,n;~ \ell=1,...,4$ the ($d=25$-dimensional) vector of daily precipitation at day $t$ in the simulation run $\ell$. The time series $(\xi_t^{(\ell)})_{t}$ are regarded as independent observations of a stochastic process $(\xi_t)_{t}$ with observation length $n=365\cdot165$.  
The $i^\text{th}$ coordinate of $\xi_t$ shall be $\xi_{t, i}$ and the $i^\text{th}$ smallest entry of $\xi_t$ shall be $\xi_{t, i:d}$.
Each task of the EVA 2025 Data Challenge corresponds to estimating an expectation (target) of the form
\begin{align*}
    \mathrm{T}_A := \mathbb{E} \left[ \sum_{t=1}^{n} \mathds{1}_{A_t} \right] = n \cdot \P(A_t),
\end{align*}
where $A_t= A_t((\xi_s)_{s\in\mathscr{S}(t)})$ is a generic spatio-temporal event describing the properties of $\xi_s$ in a temporal neighborhood $\mathscr{S}(t)$ of time $t$. The equality $\mathrm{T}_A=n \cdot \P(A_t)$ holds, of course, only under stationarity over $t$, which we want to assume throughout.

\begin{example}[Challenge events]\label{ex:events}
    We consider the following events: for a threshold $q>0$, put
    \begin{align}
        A_t^\mathrm{(P1)}(q) &:=\big\{\textstyle{\sum_{i=1}^d \xi_{t, i}} >q\big\} & 
        A_t^\mathrm{(C1)}(q) &:=\big\{\xi_{t,1:d} >q\big\} \notag\\\label{eq:the_events}
        A_t^\mathrm{(P2)}(q) &:=\big\{\xi_{t,3:5} >q\big\} & 
        A_t^\mathrm{(C2)}(q) &:=\big\{\xi_{t,6:d} >q\big\} \\
        A_t^\mathrm{(P3)}(q) &:=\big\{\xi_{t-1,3:5} \leq q,\xi_{t,3:5} \wedge\xi_{t+1,3:5} >q\big\} & 
        A_t^\mathrm{(C3)}(q) &:=\big\{\xi_{t-1,3:d} \leq q,\xi_{t,3:d} \wedge\xi_{t+1,3:d} >q\big\} \notag
    \end{align}
    The events (P1) -- (P3) and (C1) -- (C3) denote the targets of the preliminary and competition challenge, respectively.
\end{example}

Our subsequent approach is threefold. First, for a given $A_t$, we show how to construct univariate random variables $X_t$ with $\P(A_t(q))=\P(X_t>q)$, phrasing the targets as one-dimensional threshold exceedances. Next, we employ a peaks-over-threshold approach to model high threshold exceedances via a generalized Pareto distribution. Finally, we construct a $L_2$ Minimum Distance Estimator for $\P(A_t)$, whose asymptotic properties we leverage for inference. The remainder of this section formalizes the described steps.

\subsection{From events to univariate GPDs}
All events $A(q)$ of the data challenge have in common that they may be expressed as a function of a univariate parameter $q\geq 0$, to be called \textit{threshold}. Further, a common property is $\P(A(q))\to 0$ as $q\to\infty$. Contrary, the limit $q\to 0$ may differ between the events: for P1, P2, C1 and C2, we have $\P(A(q))\to 1$ as $q\to 0$ and, for P3 and C3, $\P(A(q))\to p\in[0,1]$. The qualitative difference of possible functions $q\mapsto \P(A(q))$ is depicted in the left-hand side of Figure~\ref{fig:construct}. During the data challenge, we are interested in modeling $\P(A(q))$ for large values of $q$, based on the observed counts $\#\{t: A_t(q)\}$. Assume that 
$ \P(A(q))\le \P(A(q'))$ for sufficiently large $q'<q$, the counts $\#\{t: A_t(q)\}$ are monotonically decreasing in $q$ (at sufficiently large $q$); which is not restrictive for the challenge events. Thus, it appears natural to interpret the rescaled counts as the empirical survival function of a  random variable. 
We want to construct such a random variable next and motivate a distributional assumption for it.

Recall that $q\mapsto \P(A(q))$ maps to $[0,1]$ with $\P(A(q))\downarrow 0$ as $q\to\infty$ for $q$ sufficiently large. Thus, we put
\begin{align}\label{eq:construct_F}
    && F(q)= 1-\sup_{w\ge q}\P(A_t(w)), && q\geq 0
\end{align}
with $F(q)=1-\P(A_t(q))$ for $q$ sufficiently large. See the right-hand side of Figure \ref{fig:construct} for (qualitative) examples of $1-F$. Note that $F$ is a cumulative distribution function on $[0,\infty)$ with a point mass of $1-\sup_{w\ge 0}\P(A_t(w))$ at $q=0$.\footnote{such a distribution, also known as a zero-inflated distribution, may occur, for instance, when modeling daily precipitation sums} Now introduce iid random variables $X_t\sim F, t=1,\dots,n$, and denote with $t_1,\dots, t_k$ the time points of exceedances, $\{t_1,\dots, t_k\}:=\{t:X_t>u\}$. The exceedances shall be $Y_j:=X_{t_j}-u,j=1,\dots,k$. By construction, we have for $x:=q-u>0$,
\begin{align*}
    \P(Y_j>x)=\frac{\P(X_t>q)}{\P(X_t>u)}=\frac{\P(A(q))}{\P(X_t>u)}.
\end{align*}
This demonstrates how to express the target probability $\P(A(q))$ via a univariate random variable $Y$. The term $\P(X_t>u)$ may simply be approximated via its empirical probability; the term $\P(Y_j>x)$ is a typical threshold exceedance probability. Threshold exceedances are usually modeled by a GPD, as the following section elaborates.

\begin{figure}
    \centering
    \includegraphics[width=0.75\linewidth]{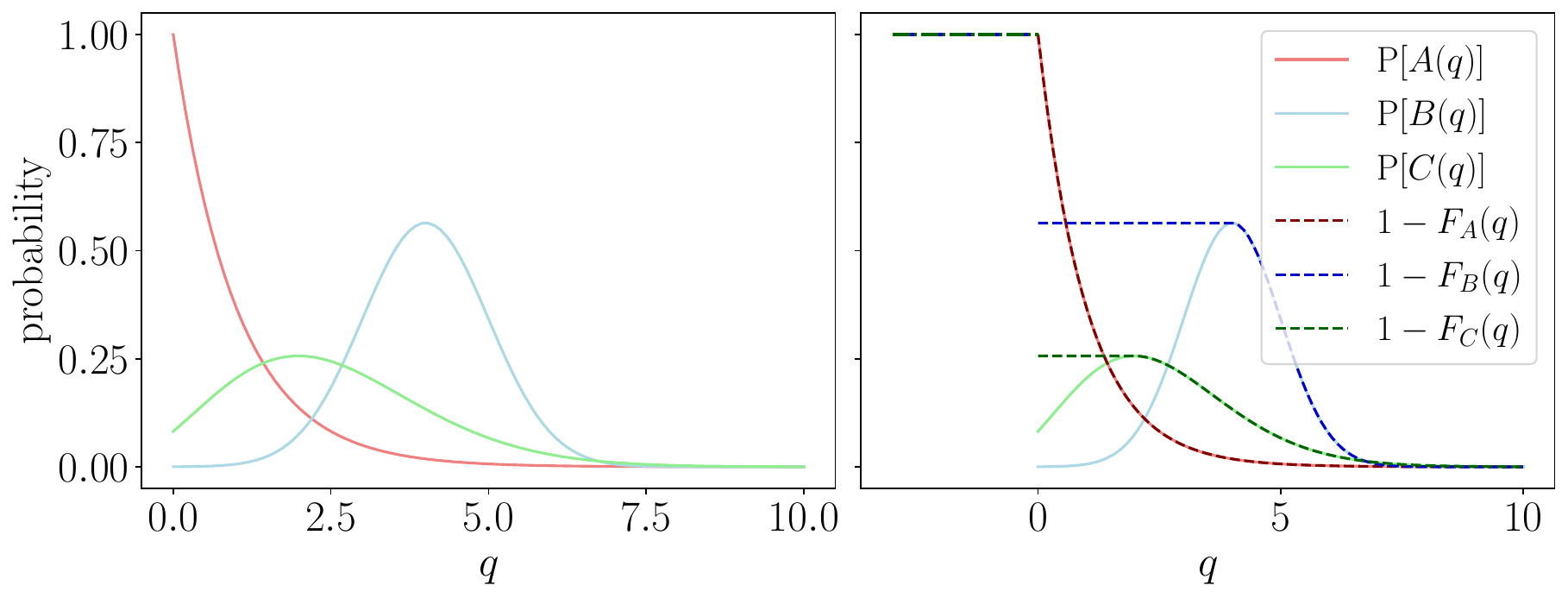}
    \caption{Left: qualitatively possible curves of $q\mapsto \P(E(q))$ for different generic events $E\in\{A,B,C\}$. Event $A$ qualitatively corresponds to P1, P2, C1 and C2; $B$ (if $p=0$) and $C$ (if $p>0$) to P3 and C3. Right: event probability curves $q\mapsto \P(E(q))$ together with their corresponding survival function $1-F_E(q)=\sup_{w\ge q}\P(A_t(w))$ as constructed in Equation \eqref{eq:construct_F}.}
    \label{fig:construct}
\end{figure}

\subsection{Peaks-over-Threshold Framework}
When analyzing exceedances of a random variable $X$ above a high threshold $u$, univariate extreme value theory motivates the use of Generalized Pareto Distribution (GPD) as a limiting model for excesses. The GPD cumulative distribution function is defined for any $x \ge 0$ by
\[
    F_\theta(x) = 1 - S_\theta(x) = 1- \left( 1+\gamma \frac{x}{\sigma} \right)_+^{-1/\gamma}, \quad \theta = (\gamma, \sigma) \in \Theta \subset \mathbb{R} \times (0,\infty),
\]
where $(z)_+ = \max(z,0)$, $\gamma$ is a shape parameter and $\sigma$ is a positive scale parameter.

We now motivate why and how to model the multivariate events $A_t$ of Example \ref{ex:events} with a GPD: let $X_1, X_2, \dots, X_n \overset{\mathrm{iid}}\sim F$ denote the random univariate variables constructed as described in the previous section. For a high threshold $u$, and excess level $x \ge 0$, define the conditional distribution of excesses:
\begin{align*}
    F_u(x) := \mathbb{P}(X_1 - u \le x \mid X_1 > u), \quad x \ge 0.
\end{align*}
The Pickands-Balkema-de Haan theorem \cite{Pickands1975,Balkema1974}
gives conditions, so that for sufficiently large $u$, $F_u(x) \approx F_\theta(x)$
where $F_\theta$ is the GPD with shape parameter $\gamma \in \mathbb{R}$ and $\sigma > 0$. In particular, exceedance probabilities admit the classical approximation
\begin{align}
    \P(X_1 > u + x) \approx \P(X_1 > u) \cdot S_{\theta}(x) \label{eq:splititup}
\end{align}
The exceedance probability $\mathbb{P}( X_1 > u )$ can be estimated empirically by $\hat {\P}(X_1 > u)=\frac1n\sum_{t=1}^n \indic(X_t>u)$, so it remains to focus on estimating the GPD parameters $\theta=(\gamma,\sigma)$.
\begin{example}
    To illustrate how multivariate tail behavior leads to a univariate POT limit, we begin with a concrete example. Let $\xi_t \sim Y$ for any $t = 1,\dots,n$ where $Y$ is a $d$-dimensional real random vector. Suppose that the $d$-dimensional random vector $Y = (Y_1,\dots,Y_d)$ follows max-linear model, i.e., $Y_j = \max_{a=1}^K A_{ja} Z_a$, $j = 1,\dots,d$ with nonnegative coefficients $A_{ja}$ and independent $\alpha$-Fréchet $Z_a$, $a = 1,\dots,K$ with $K \in \mathbb{N}_{\ge 1}$. Consider the failure region $C(x) = \{ y : \max(y_1,\dots,y_n) > x \}$, which is standard in environmental applications such as wind speed or extreme rainfall. Then, classical multivariate extreme value theory yields the approximation for exceedance of $X = \max\{Y_1,\dots,Y_d\}$ for large $x$
\[
\mathbb{P}\left\{ X > x \right\} \approx\nu_{Y}(C(x)) = \frac{1}{x^\alpha} \sum_{a=1}^K \max_{j=1,\dots,d} A_{ja}^\alpha,
\]
where $\nu_Y$ is the so-called exponent measure of the max-linear random vector $\xi$, see, e.g., \cite[Equation (9)]{kiriliouk2022estimating}, \cite[Lemma 2.7]{boulin2025structured}. Consequently, the tail of $X$ is regularly varying with index $\alpha$, so exceedances over a high threshold follow a GPD limit with shape parameter $\gamma = 1/\alpha > 0$ and scale parameter $\sigma = \frac{1}{\alpha} \left( \sum_{a=1}^K \max_{j=1,\dots,d} A_{ja}^\alpha \right)^{1/\alpha}$. Similar representation can be done for failure region such as $\{\mathbf{y} : \min(y_1,\dots,y_d) > x\}$, $\{\mathbf{y} ; \sum_{j=1}^d v_jy_j > x\}$ where $v_j > 0$, $j=1,\dots,d$ or, more generally, for any homogeneous transformation $h : [0,\infty)^d \rightarrow \mathbb{R}$ of a regularly varying random vector $\mathbf{Y}$ such that $\nu_{\mathbf{Y}} \circ h^{-1}$ is not the null-measure using \cite[Proposition 2.1.12]{kulik2020heavy}.
This illustrates how multivariate regular variation naturally leads to a univariate GPD limit for exceedances of suitable transformations.
\end{example}

\subsection{Minimum Distance Estimation (MDE)}\label{subs:mde}
Recall $X_1, X_2, \dots, X_n \overset{\mathrm{iid}}\sim F$, fix a high threshold $u = u_n \in \mathbb{R}$, and define the (random) number of exceedances as $K=K_n := \sum_{i=1}^n \indic_{\{X_i > u\}}$. Denote by $\{i_1, \dots, i_K\} \subset \{1, \dots, n\}$ the indices of those observations that exceed the threshold. For each $j = 1, \dots, K$, we define the excesses above $u$ by $Y_j := X_{i_j} - u.$ Thus, put $Y_1, \dots, Y_K \in (0, \infty)$ to denote the observed exceedances, which form the basis for GPD modeling. Define the empirical survival function to be
\begin{align*}
    \hat S_K(x) := \frac1K \sum_{j=1}^K \indic(Y_j>x).
\end{align*}
To construct the MDE, we will minimize the distance functional
\begin{align*}
    d(F,G):=\int_0^\infty [F(x)-G(x)]^2 \diff x,
\end{align*}
for any two functions $F,G \in L_2$. If $\gamma<2$, then $S_\theta \in L_2$ and we can we define
\begin{align*}
    && \hat\theta_K^\mathrm{MDE}:= \arg\min_{\theta\in\Theta} J_K(\theta), && J_K(\theta) = d( \hat S_K, S_\theta).
\end{align*}
By the plug-in principle, we obtain an estimator for the second term of the desired exceedance probability \eqref{eq:splititup}, $\hat S_K^\mathrm{MDE}(x)= S_{\hat\theta_K^\mathrm{MDE}}(x)$. To avoid technical complications with dependence within exceedances, only approximate validity of the GPD law, and $K$ being random, we will derive asymptotic properties for 
\begin{align*}
    \hat\theta_k^\mathrm{MDE}:= \arg\min_{\theta\in\Theta} J_k(\theta), && J_k(\theta) = d( \hat S_k, S_\theta), &&\hat S_k(x) := \frac1k \sum_{j=1}^k \indic(Z_j>x),
\end{align*}
with $Z_1,\dots, Z_k\sim \GPD(\theta_0)$ iid and $k\in\N$ deterministic. The sample $(Z_j)_j$ shall be thought of approximating $(Y_j)_j$.
The results below are stated under quite restrictive conditions, but sufficient to apply the estimator to the challenge. Most importantly, see Section \ref{sec:application} to verify $\gamma\in[0,1]$. In particular, they allow us to formulate the MDE as a $Z$-estimator.
\begin{condition}\label{cond:1}
    The random variables $Z_1,Z_2,\dots$ are iid and follow exactly a $\mathrm{GPD}(\theta_0)$ distribution, with $\theta_0\in\mathrm{Interior}(\Theta)$, where $\Theta\subset (0,1)\times (0,\infty)$ is a compact and convex subset.
\end{condition}
\begin{lemma}[MDE is a $Z$-estimator]\label{lem:write_as_z}
    Let $(z_1, \dots, z_k) \in (0,\infty)^k$ and $\Theta\subset (0,1)\times (0,\infty)$ compact. Any local minimizer of the optimization problem
    \begin{align*}
        \arg\min_{\theta\in\Theta} d\Big(\frac1k \sum_{j=1}^k \indic(z_j>x), S_\theta\Big)
    \end{align*}
    satisfies
    \[
        \Psi_k(\theta):=\frac{1}{k}\sum_{j=1}^k \psi(z_j, \theta) = 0, 
    \]
    where, for $x\in(0,\infty)$, $\psi(x,\theta) = \begin{pmatrix}
        \psi_\gamma(x,\theta) & \psi_\sigma(x,\theta)
    \end{pmatrix}^\top$ and
    \begin{align*}
        \psi_\gamma(x,\theta) &= \frac{\sigma }{2 (\gamma -2)^2}+(\gamma -1)^{-2} \gamma ^{-2}\Big[-\gamma ^2 \sigma+\Big(\frac{\sigma }{\sigma +\gamma  x}\Big)^{1/\gamma } \Big\{\gamma  (\gamma  \sigma +(2 \gamma -1) x)\\
        &\phantom{{}={}}\hspace{5cm}-(\gamma -1) (\sigma +\gamma  x) \log \left(\frac{\gamma  x}{\sigma }+1\right)\Big\} \Big] \\
        \psi_\sigma(x,\theta) &= -\frac{1}{2 (\gamma -2)}-(\gamma -1)^{-1} \sigma^{-1}\Big[(\sigma +x) \Big(\frac{\sigma }{\sigma +\gamma  x}\Big)^{1/\gamma }-\sigma  \Big].
    \end{align*}
\end{lemma}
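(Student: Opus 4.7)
The plan is to identify the first-order optimality condition $\nabla_\theta J_k(\theta) = 0$ at an interior local minimizer, rewrite it in the form $\frac{1}{k}\sum_j \psi(z_j, \theta) = 0$, and then verify by direct integration that the two components of this $\psi$ coincide (up to an irrelevant overall sign, which does not affect the zero equation) with the displayed $\psi_\gamma$ and $\psi_\sigma$.

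First I would differentiate $J_k(\theta) = \int_0^\infty [\hat S_k(x) - S_\theta(x)]^2 \diff x$ under the integral sign. This is justified because on the compact set $\Theta \subset (0,1)\times(0,\infty)$ the integrand $S_\theta^2$ and the partial derivatives of $S_\theta$ admit integrable envelopes; in particular $\int_0^\infty S_\theta^2 \diff x = \sigma/(2-\gamma)$ is finite for all $\gamma < 2$. One obtains
\begin{align*}
\nabla_\theta J_k(\theta) = -2\int_0^\infty [\hat S_k(x) - S_\theta(x)]\,\nabla_\theta S_\theta(x)\diff x.
\end{align*}
Substituting $\hat S_k(x) = \frac{1}{k}\sum_{j=1}^k \indic(z_j > x)$ and applying Fubini give $\int_0^\infty \hat S_k(x)\,\nabla_\theta S_\theta(x)\diff x = \frac{1}{k}\sum_j \int_0^{z_j}\nabla_\theta S_\theta(x)\diff x$. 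Setting $\nabla_\theta J_k(\theta) = 0$ at an interior local minimizer therefore yields $\frac{1}{k}\sum_j \psi(z_j,\theta) = 0$ with
\begin{align*}
\psi(z,\theta) = \int_0^\infty S_\theta(x)\,\nabla_\theta S_\theta(x)\diff x - \int_0^z \nabla_\theta S_\theta(x)\diff x.
\end{align*}

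It then remains to evaluate these two integrals in closed form. The model-side piece I would compute by differentiation under the integral: $\int_0^\infty S_\theta\,\nabla_\theta S_\theta\diff x = \tfrac12 \nabla_\theta\int_0^\infty S_\theta^2\diff x = \tfrac12 \nabla_\theta\bigl(\sigma/(2-\gamma)\bigr)$, which produces the leading terms $\sigma/(2(2-\gamma)^2)$ and $1/(2(2-\gamma))$ in the $\gamma$- and $\sigma$-components, matching the first summand of both $\psi_\gamma$ and $\psi_\sigma$. For the data-side piece, I would start from the elementary primitive $\int_0^z(1+\gamma x/\sigma)^{-1/\gamma}\diff x = \frac{\sigma}{\gamma - 1}\bigl\{(1+\gamma z/\sigma)^{(\gamma-1)/\gamma} - 1\bigr\}$ and differentiate with respect to $\gamma$ and $\sigma$, once more interchanging derivative and integral over the compact set $\Theta$.

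The main obstacle is the algebraic verification of $\psi_\gamma$: the $\gamma$-derivative of $(1+\gamma z/\sigma)^{(\gamma-1)/\gamma}$ produces both the power factor $(\sigma/(\sigma+\gamma z))^{1/\gamma}$ and a logarithmic contribution involving $\log(1+\gamma z/\sigma)$, and these must be regrouped carefully to match the compact form displayed in the lemma. This is long but entirely routine; the analogous computation for $\psi_\sigma$ involves considerably less bookkeeping, so I would carry it out first as a sanity check before committing to the $\gamma$-calculation.
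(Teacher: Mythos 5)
Your proposal is correct and follows essentially the same route as the paper: differentiate $J_k$ under the integral sign (justified by integrable envelopes on the compact $\Theta\subset(0,1)\times(0,\infty)$), use Fubini to turn the empirical term into $\frac1k\sum_j\int_0^{z_j}\nabla_\theta S_\theta(x)\diff x$, and evaluate the resulting integrals explicitly. The only (minor, and rather elegant) difference is that you compute the constant term as $\tfrac12\nabla_\theta\int_0^\infty S_\theta^2\diff x=\tfrac12\nabla_\theta\bigl(\sigma/(2-\gamma)\bigr)$, whereas the paper obtains the same values $\sigma/(2(\gamma-2)^2)$ and $-1/(2(\gamma-2))$ by partial integration as a double integral against $\diff F_\theta$.
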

By standard $Z$-estimator theory, we conclude two major results. 
\begin{theorem}[Consistency]\label{cor:consist_theta}
    Assume Condition \ref{cond:1}. Then any sequence $\hat\theta_k^\mathrm{MDE}=\hat\theta_k^\mathrm{MDE}(Z_1,\dots,Z_k)$ with $J_k(\hat\theta_k^\mathrm{MDE})\leq J_k(\theta_0)+o_\P(1)$
    , as $k\to\infty$,
    \begin{align*}
        \hat\theta_k^\mathrm{MDE} \xlongrightarrow{\P} \theta_0.
    \end{align*}
    Furthermore, for $x\ge 0$,
    \begin{align*}
        &&\hat S_k^\mathrm{MDE}(x)= S_{\hat\theta_k^\mathrm{MDE}}(x)\xlongrightarrow{\P} S_{ \theta_0}(x):=S_0(x).
    \end{align*}
\end{theorem}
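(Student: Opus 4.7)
My plan is to treat $\hat\theta_k^{\mathrm{MDE}}$ as an approximate minimizer of the criterion $J_k$ and invoke the standard consistency theorem for $M$-estimators (van der Vaart, Thm.~5.7), with limiting criterion $J(\theta):=d(S_0,S_\theta)=\int_0^\infty(S_0(x)-S_\theta(x))^2\diff x$. Two ingredients must be verified: (a) $J$ has a well-separated minimum at $\theta_0$ on $\Theta$, and (b) $\sup_{\theta\in\Theta}|J_k(\theta)-J(\theta)|\xlongrightarrow{\P}0$. Granted these, the assumption $J_k(\hat\theta_k^{\mathrm{MDE}})\le J_k(\theta_0)+o_\P(1)$ yields $\hat\theta_k^{\mathrm{MDE}}\xlongrightarrow{\P}\theta_0$, and the pointwise statement for survival functions follows by continuity of $\theta\mapsto S_\theta(x)$ and the continuous mapping theorem.

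\textbf{Step (a).} Explicit computation gives $\int_0^\infty S_\theta\diff x=\sigma/(1-\gamma)$ and $\int_0^\infty S_\theta^2\diff x=\sigma/(2-\gamma)$, both continuous and uniformly bounded on the compact set $\Theta\subset(0,1)\times(0,\infty)$ guaranteed by Condition~\ref{cond:1}. Expanding $J(\theta)=\int S_0^2\diff x+\int S_\theta^2\diff x-2\int S_0 S_\theta\diff x$ and using dominated convergence (with envelope $S_0+S_{\theta'}$ on a compact neighborhood) shows that $J$ is continuous. Identifiability of the one-parameter GPD family yields $J(\theta)=0\iff\theta=\theta_0$, so a continuous nonnegative function on a compact set with unique zero $\theta_0$ automatically has the well-separated minimum property $\inf_{\|\theta-\theta_0\|\ge\varepsilon}J(\theta)>0$ for every $\varepsilon>0$.

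\textbf{Step (b).} The factorization $(\hat S_k-S_\theta)^2-(S_0-S_\theta)^2=(\hat S_k-S_0)(\hat S_k+S_0-2S_\theta)$ combined with $|\hat S_k+S_0-2S_\theta|\le \hat S_k+S_0+2S_\theta$ yields
\[
    \sup_{\theta\in\Theta}|J_k(\theta)-J(\theta)|\;\le\; \|\hat S_k-S_0\|_\infty\cdot\Big(\tfrac{1}{k}\textstyle\sum_{j=1}^kZ_j+\E Z_1+2\sup_{\theta\in\Theta}\tfrac{\sigma}{1-\gamma}\Big),
\]
using $\int_0^\infty\hat S_k\diff x=\tfrac1k\sum_j Z_j$ and $\int_0^\infty S_0\diff x=\sigma_0/(1-\gamma_0)<\infty$. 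Glivenko--Cantelli gives $\|\hat S_k-S_0\|_\infty\to 0$ almost surely; the law of large numbers (valid because $\gamma_0<1$ forces $\E Z_1<\infty$) gives $\tfrac1k\sum_j Z_j\to\E Z_1$ almost surely; and the last supremum is finite on compact $\Theta$. Hence the right-hand side is $o_\P(1)$, independently of $\theta$.

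\textbf{Conclusion and main obstacle.} Chaining
\[
    J(\hat\theta_k^{\mathrm{MDE}})\;\le\; J_k(\hat\theta_k^{\mathrm{MDE}})+\sup_{\theta\in\Theta}|J_k-J|\;\le\; J_k(\theta_0)+o_\P(1)+\sup_{\theta\in\Theta}|J_k-J|\;\xlongrightarrow{\P}\;0
\]
and invoking well-separation of the minimum delivers $\hat\theta_k^{\mathrm{MDE}}\xlongrightarrow{\P}\theta_0$. The second display then follows by the continuous mapping theorem applied to the continuous map $\theta\mapsto S_\theta(x)$ for each fixed $x\ge 0$. The main obstacle I foresee is step~(b): one needs uniform integrability of $\hat S_k+S_0+2S_\theta$ in probability, uniformly in $\theta$; this is exactly where Condition~\ref{cond:1}---compactness of $\Theta\subset(0,1)\times(0,\infty)$, giving $\sup\sigma/(1-\gamma)<\infty$ and enabling a finite-mean LLN---is essential. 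Everything else reduces to routine Glivenko--Cantelli, LLN, and the $M$-estimator argument.
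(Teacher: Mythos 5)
Your proposal is correct and follows the same skeleton as the paper's proof: both invoke van der Vaart's Theorem 5.7, both use the identity $(\hat S_k-S_\theta)^2-(S_0-S_\theta)^2=(\hat S_k-S_0)(\hat S_k+S_0-2S_\theta)$ to establish uniform convergence of $J_k$ to $J$, both reduce well-separation to identifiability plus compactness, and both finish the survival-function claim by continuous mapping. The one genuine difference is how the uniform-convergence bound is closed: you pair the factors via H\"older with exponents $(\infty,1)$, so that $\sup_\theta|J_k(\theta)-J(\theta)|\le\|\hat S_k-S_0\|_\infty\cdot(\tfrac1k\sum_j Z_j+\E Z_1+2\sup_\Theta\sigma/(1-\gamma))$, and then invoke Glivenko--Cantelli and the strong LLN (using $\gamma_0<1$ for $\E Z_1<\infty$); the paper instead pairs them via Cauchy--Schwarz in $L_2$, computes $\E\|\hat S_k-S_0\|_{L_2}^2=k^{-1}\int S_0(1-S_0)\diff x$ by Fubini and the pointwise binomial variance, and applies Chebyshev. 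Both closings are valid under Condition \ref{cond:1}; yours is arguably more elementary (no second-moment computation needed), while the paper's $L_2$ route avoids any appeal to integrability of $Z_1$ and stays entirely within the Hilbert-space geometry of the criterion. You are also more explicit than the paper on the well-separation step (continuity of $J$ via the closed forms $\int S_\theta\diff x=\sigma/(1-\gamma)$ and $\int S_\theta^2\diff x=\sigma/(2-\gamma)$, plus compactness), where the paper simply asserts it follows from identifiability; your added detail is correct and fills a small gap in exposition.
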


\begin{theorem}[Asymptotic Distribution]\label{thm:normal_theta}
    Assume Condition \ref{cond:1}. Then any consistent estimator sequence $\hat\theta_k^\mathrm{MDE}$ with $\Psi_k(\hat\theta_k^\mathrm{MDE})=o_\P(1/\sqrt{k})$, as $k\to\infty$
        \begin{align*}
            \sqrt{k}\big(\hat\theta_k^\mathrm{MDE}-\theta_0\big) \xlongrightarrow{\mathcal D} Z\sim \mathcal N_2(0,\Sigma_{\theta_0}),
        \end{align*}
        with $\Sigma_{\theta_0}$ being explicitly given in the proof and visualized in Figure \ref{fig:cov_entries}. Second, for $x\geq 0$,
        \begin{align*}
            &&\sqrt{k}\big(\hat S_\mathrm{MDE}(x)-S_{0}(x)\big) \xlongrightarrow{\mathcal D} \mathbb G(x):= \nabla_\theta S_0(x)^\top Z.
        \end{align*}
        where $\mathbb G(x)$ is a mean-zero Gaussian Process with covariance kernel 
        \begin{align*}
            \Cov(\mathbb G(x),\mathbb G(x'))=\nabla_\theta S_0(x)^\top \Sigma_{\theta_0} \nabla_\theta S_0(x').
        \end{align*}
        In particular, $\varsigma^2_\theta(x):=\Cov(\mathbb G(x),\mathbb G(x))$ satisfies
        \begin{align*}
            &\phantom{=}(3 (3-2 \gamma )^3 (\gamma -6)^2 (\gamma -3)^2 \gamma ^4 (\sigma +\gamma  x)^2)\cdot\varsigma^2_\theta(x) \\
            &= 4 \Big(\frac{\gamma  x}{\sigma }+1\Big)^{-2/\gamma } \bigg[4 \gamma ^2 (\gamma +2)^2 \Big(\gamma  \Big(\gamma  \Big(30 \gamma ^2-266 \gamma +(\gamma -2)^2 (\sigma +\gamma  x) \log \Big(\frac{\gamma  x}{\sigma }+1\Big)+857\Big)\\
            &\phantom{{}={}}-1208\Big)+639\Big) x^2\cdot \bigg\{-\Big(\Big(\gamma  \Big(\gamma  \Big(2 \gamma  \Big(4 \gamma ^2-58 \gamma +243\Big)-683\Big)+452\Big)-639\Big) (\gamma -2)^2\\
            &\phantom{{}={}}\cdot(\sigma +\gamma  x) \log \Big(\frac{\gamma  x}{\sigma }+1\Big)\Big)
            -4 \gamma  (\gamma +2) \Big(2 \gamma  \Big(\gamma  \Big(8 \gamma ^2-62 \gamma +223\Big)-415\Big)+639\Big) x\bigg\}\bigg]
        \end{align*}
\end{theorem}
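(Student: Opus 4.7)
The strategy is to invoke standard $Z$-estimator theory: Lemma \ref{lem:write_as_z} already writes any MDE minimizer as a zero of $\Psi_k(\theta)=\frac{1}{k}\sum_{j=1}^k \psi(Z_j,\theta)$, Theorem \ref{cor:consist_theta} supplies consistency, and Condition \ref{cond:1} places $\theta_0$ in the interior of a compact parameter set with $\gamma\in(0,1)$. Thus, if I can check the regularity assumptions of a standard result such as van der Vaart, \emph{Asymptotic Statistics}, Theorem 5.21/5.41, the conclusion follows with
\[
    \Sigma_{\theta_0} = M_{\theta_0}^{-1}\, V_{\theta_0}\, M_{\theta_0}^{-\top},\qquad M_{\theta_0}:=\E\bigl[\nabla_\theta\psi(Z,\theta_0)\bigr],\qquad V_{\theta_0}:=\E\bigl[\psi(Z,\theta_0)\psi(Z,\theta_0)^\top\bigr].
\]

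\textbf{Key steps.} First, I verify $\E[\psi(Z,\theta_0)]=0$. This is immediate from the construction: $\psi$ is the gradient (in $\theta$) of the integrand yielding $J_k$, and $\theta\mapsto \E[J_k(\theta)] = \int_0^\infty[S_{\theta_0}(x)-S_\theta(x)]^2\diff x + C_k$ (with $C_k$ not depending on $\theta$) is globally minimized at $\theta_0$, so its gradient vanishes there; interchanging expectation and gradient is justified on the compact $\Theta\subset(0,1)\times(0,\infty)$ by dominated convergence, using the polynomial decay of $S_\theta$ and $\partial_\theta S_\theta$ uniformly over $\Theta$ (here $\gamma<1$ guarantees finite first and second moments of the relevant integrands). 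Second, I verify that $\psi(\cdot,\theta)$ is locally Lipschitz in $\theta$ with an $L_2(F_{\theta_0})$-integrable envelope: the closed form in Lemma \ref{lem:write_as_z} shows $\psi$ is smooth in $\theta$ on $\Theta$ and polynomially bounded in $x$, and the derivatives $\partial_\theta\psi$ inherit the same structure, which gives $\sup_{\theta\in\Theta}\|\partial_\theta\psi(x,\theta)\|\le c(1+x)^{\alpha}$ for some $\alpha$ with $\E_{\theta_0}[(1+Z)^{2\alpha}]<\infty$ since $\gamma_0\in(0,1)$. Third, I compute $M_{\theta_0}=\E[\nabla_\theta \psi(Z,\theta_0)]$ and $V_{\theta_0}=\E[\psi(Z,\theta_0)\psi(Z,\theta_0)^\top]$ by elementary integration against the GPD$(\theta_0)$ density; these give the explicit $\Sigma_{\theta_0}$ visualized in Figure \ref{fig:cov_entries}, and non-singularity of $M_{\theta_0}$ (needed for inversion) follows because $M_{\theta_0}$ equals the positive-definite Hessian at the global minimum of $\theta\mapsto \int(S_\theta-S_{\theta_0})^2\diff x$ at $\theta_0$, the GPD parameterization $\theta\mapsto S_\theta$ being identifiable.

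\textbf{From $\hat\theta$ to $\hat S$.} The convergence of $\hat S_{\mathrm{MDE}}(x)$ follows from the functional delta method (or the ordinary multivariate delta method at each fixed $x$) applied to the smooth map $\theta\mapsto S_\theta(x)$; the gradient $\nabla_\theta S_0(x)$ can be read off directly from the GPD formula. Treating $x$ as a parameter gives the Gaussian process limit with the stated covariance kernel. The scalar variance $\varsigma_\theta^2(x)$ is then $\nabla_\theta S_0(x)^\top \Sigma_{\theta_0}\nabla_\theta S_0(x)$; the closed expression in the statement is obtained by substituting the explicit entries of $\Sigma_{\theta_0}$ and $\nabla_\theta S_0(x)$ and simplifying.

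\textbf{Main obstacle.} The theoretical structure is standard; the bulk of the work is computational. The main obstacle is the lengthy but elementary evaluation of $M_{\theta_0}$ and $V_{\theta_0}$ in closed form---each entry requires integrals of the type $\int_0^\infty (1+\gamma x/\sigma)^{-a/\gamma}\log^b(1+\gamma x/\sigma)\,x^c\diff x$ against the GPD density, which reduce to Beta/Gamma integrals, and the subsequent algebraic simplification leading to the displayed formula for $\varsigma^2_\theta(x)$. Verification of the domination/Lipschitz envelope requires care near the boundary of $\Theta$, but the compactness of $\Theta$ together with $\gamma\in(0,1)$ makes these bounds routine.
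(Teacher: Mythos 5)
Your proposal follows essentially the same route as the paper: both cast the MDE as a $Z$-estimator via Lemma \ref{lem:write_as_z}, invoke \cite[Theorems 5.21/5.41]{Vaart1998} under Condition \ref{cond:1}, compute the sandwich covariance $\Sigma_{\theta_0}=U_{\theta_0}^{-1}V_{\theta_0}(U_{\theta_0}^\top)^{-1}$ by explicit (computer-algebra) integration against the $\mathrm{GPD}(\theta_0)$ law, and pass to $\hat S_{\mathrm{MDE}}$ by the delta method. One caveat: your justification of the envelope condition via ``$\sup_{\theta}\|\partial_\theta\psi(x,\theta)\|\le c(1+x)^{\alpha}$ with $\E_{\theta_0}[(1+Z)^{2\alpha}]<\infty$ since $\gamma_0\in(0,1)$'' is not right as stated, because a $\mathrm{GPD}(\gamma_0,\sigma_0)$ only has moments of order below $1/\gamma_0$, so a polynomial envelope of unspecified degree need not be square-integrable; the correct argument uses that the factor $(1+\gamma x/\sigma)^{-1/\gamma}$ in $\psi$ and its $\theta$-derivatives tempers the polynomial and logarithmic growth, making these functions in fact bounded in $x$ (the paper handles this by explicitly integrating the second-order derivatives of $\psi$ against $F_{\theta_0}$ and checking finiteness). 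This is a repairable imprecision rather than a structural flaw, and the rest of your outline matches the paper's proof.
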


\begin{figure}
    \centering
    \includegraphics[width=\linewidth]{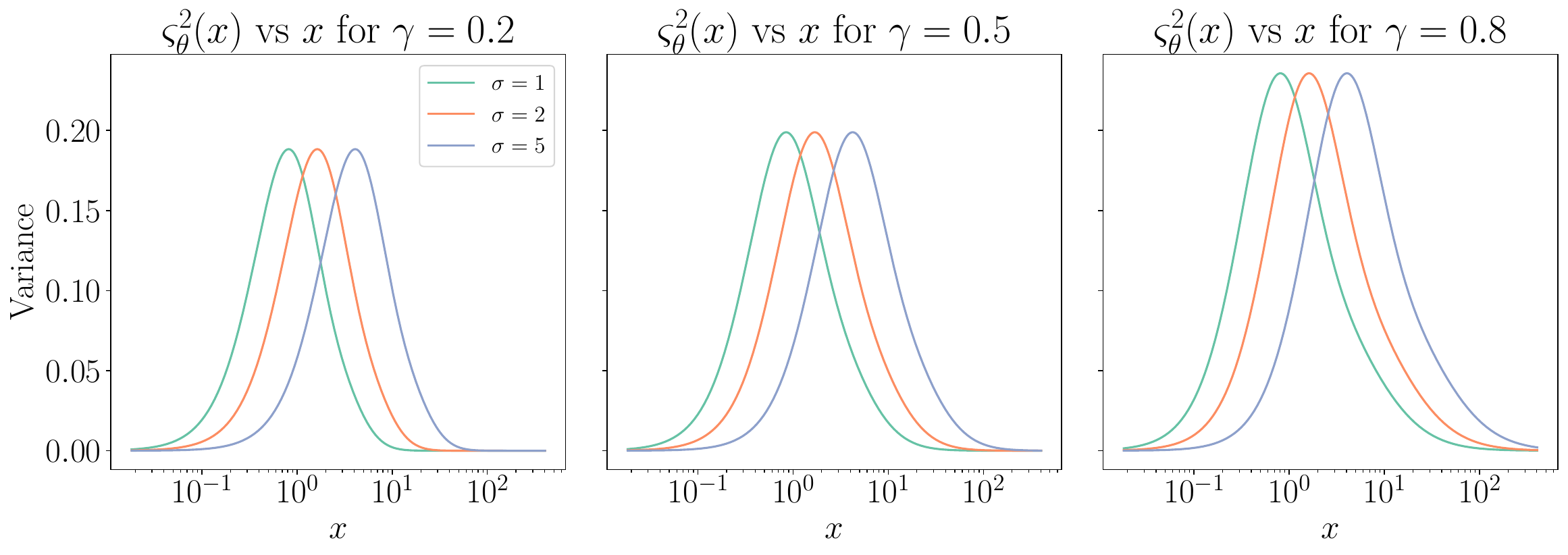}
    \caption{A depiction of the variance functions $x\mapsto\varsigma^2_\theta(x)$ for different choices of $\gamma$ and $\sigma$.}
    \label{fig:sigmax}
\end{figure}

The relevance in Theorem \ref{thm:normal_theta} lies in the explicit formula for $\varsigma_\theta^2(x)$. Utilizing the variance estimator $\hat\varsigma_k^2(x):=\varsigma_{\hat\theta_k^\mathrm{MDE}}^2(x)$, it allows to construct an asymptotically valid level-$\alpha$ confidence interval for $S_\theta(x)$ via
\begin{align}\label{eq:CI}
    \mathcal{C}_\theta(x):= \big[\hat S_\mathrm{MDE}(x) \pm \sqrt{k}\cdot q_{1-\alpha/2}\cdot\hat\varsigma_k(x)\big],
\end{align}
where $q_{1-\alpha/2}$ denotes the ${1-\alpha/2}$-quantile of the standard normal distribution.

\section{Application to the Data Challenge} \label{sec:application}
We now apply the methodology introduced in Section~\ref{sec:meth} to the three three preliminary (P1) -- (P3) and three competition targets (C1) -- (C3) from the EVA 2025 Data Challenge. Recall that each target corresponds to an expected number of spatio-temporal exceedances in a 25-dimensional grid over 165 years of daily observations. In particular, we are interested in the estimation of
\begin{align*}
    A_t^\mathrm{(P1)}(85), && A_t^\mathrm{(P2)}(4.3), && A_t^\mathrm{(P3)}(2.5), &&
    A_t^\mathrm{(C1)}(1.7), && A_t^\mathrm{(C2)}(5.7), && A_t^\mathrm{(C3)}(5),
\end{align*}
with $A_t^\mathrm{(T)}(q)$ as in Equation \eqref{eq:the_events}. For each target, we are provided four model runs $(\xi_t^{(\ell)})_t$, assumingly independent copies of $(\xi_t)_t$. Every copy leads to events $A_t^{\mathrm{(T)},\ell}(q)$, with equal probability $\P(A_t^{\mathrm{(T)},\ell}(q))=\P(A_t^{\mathrm{(T)},r}(q))$ for all $\ell,r=1,2,3,4$. 

Following the construction of Section \ref{subs:mde}, 
we denote $F^\mathrm{(T)}(q)=1-\sup_{w\geq q}\P(A_t^{\mathrm{(T)}}(w))$ and put $X_t^{(\mathrm{T}),\ell} \sim F^\mathrm{(T)}$. Further, fix a threshold $u\in(0,\infty)$, to be chosen later. We put $k_\ell:=|\{t\in[n]:X_t^{(\mathrm{T}),\ell}>u\}|$ and denote the indices (time points) of exceedances in copy $\ell$ with $\{i_1,\dots,i_{k_\ell}\}=\{t\in[n]:X_t^{(\mathrm{T}),\ell}>u\}$. The exceedances themselves shall be the random variables $Y_j^{(\mathrm{T}),\ell}:=X_{i_j}^{(\mathrm{T}),\ell}-u$ for $j=1,\dots,k_\ell$. Their survival functions are given by
\begin{align*}
    \hat S_k^{(\mathrm{T}),\ell}(x) := \frac1{k_\ell} \sum_{j=1}^{k_\ell} \indic(Y_j^{(\mathrm{T}),\ell}>x).
\end{align*}
With $Y_j^{(\mathrm{T}),\ell}$ being independent over $\ell$, we can obtain a more refined estimator for $S(x)$ by averaging over the four runs, leading to
\begin{align*}
    \hat S_k^{(\mathrm{T})}(x) := \frac1{k} \sum_{\ell=1}^4\sum_{j=1}^{k_\ell} \indic(Y_j^{(\mathrm{T}),\ell}>x),
\end{align*}
where $k=k_1+k_2+k_3+k_4$. The graphs of $\hat S_k^{(\mathrm{T})}(x)$ are depicted together with their MDE fits in \ref{fig:fits_c} for targets C1 -- C3, the preliminary targest P1 -- P3 are shown in the appendix, Figure \ref{fig:fits_p}. In contrast to the theory in Section \ref{subs:mde}, we fit a three-parametric GPD, $\vartheta=(\gamma,\mu,\sigma)$ to the survival function
\begin{align}
    \hat{S}_\vartheta(x):=\left(1 + \gamma\frac{ x-\mu}{\sigma} \right)^{-1/\gamma}_+, \quad x \ge \mu, \quad \vartheta = (\gamma,\mu,\sigma)\in\Xi\subset \R^2\times (0,\infty),
\end{align}
allowing for a more flexible model. 

One of the practical advantages of the proposed method is its simplicity, which goes hand in hand with its computational speed: a few seconds on a standard laptop.

\begin{figure}
    \centering
    \includegraphics[width=\linewidth]{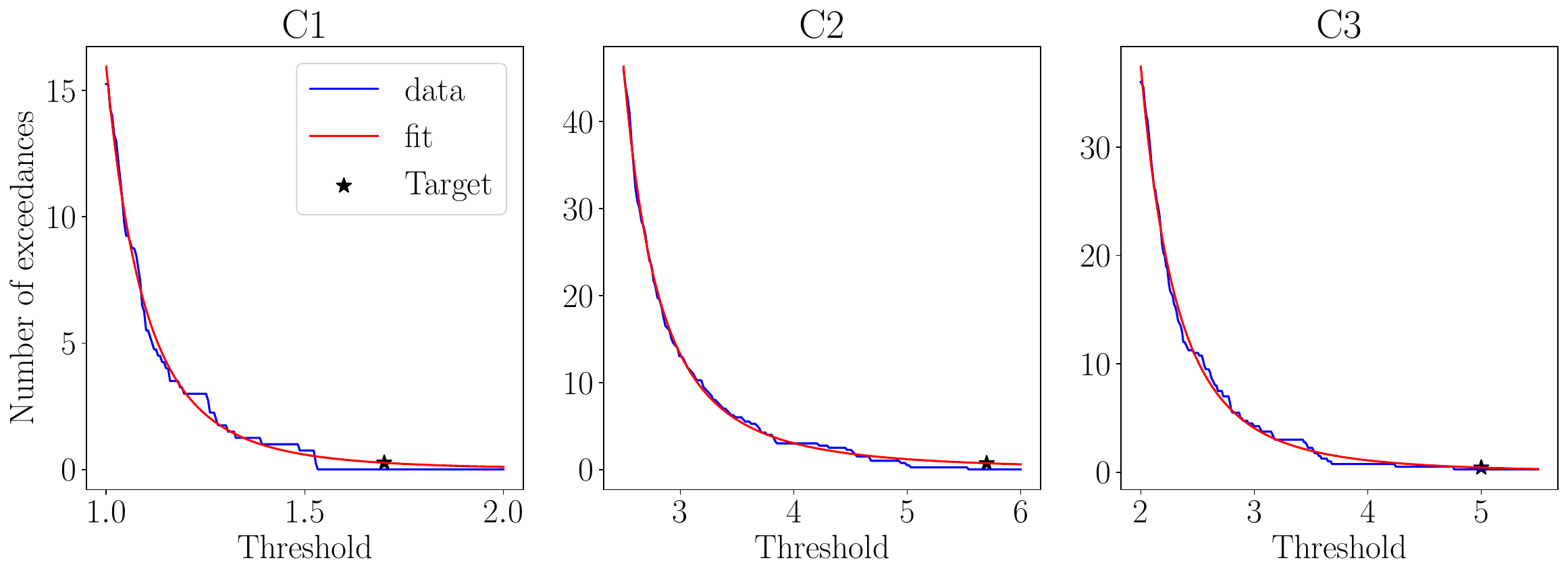}
    \caption{Average of empirical targets $\hat{\mathrm{T}}_A(q)= \sum_{t=1}^{n} \mathds{1}(A_t(q))$ (blue) and corresponding MDE fits $\hat{\mathrm{T}}_A^\mathrm{MDE}(q)$ (red), targets C1 -- C3, defined in Example~\ref{ex:events}. The black $\star$ denotes the target of the respective challenge.}
    \label{fig:fits_c}
\end{figure}

\subsection{Residual-based estimation of $\varsigma$}\label{sec:residual-estim}
At the time of the data challenge, Theorem \ref{thm:normal_theta} was not available, so that the estimator $\hat\varsigma_k$ as a function of $\hat\theta$, as used in Equation \eqref{eq:CI}, could not be utilized. We therefore employed a residual-based approach for the estimation of $\varsigma_\theta^2(x)$, which shall be shortly discussed. Simulation results strongly suggest to prefer the plug-in estimator $\hat\varsigma^2_k(x)$.

Consider the absolute residuals between the empirical survival function and the MDE fit
\begin{align*}
    \hat r_k(x) := |\hat S_k(x) - \hat S_k^\mathrm{MDE}(x)|.
\end{align*}
Inspired by heteroscedastic regression, e.g. \cite{Glejser1969}, 
the residuals $\hat r_k(x)$ were treated as approximations to the standard deviation $\varsigma_k(x)$. We therefore fit a parametric model to the residuals, see Figure \ref{fig:resids_c}, and choose, for the sake of simplicity, the one-parametric model
\begin{align}\label{eq:simple_sig}
    \varsigma_\phi(x):= \phi\cdot\sqrt{\hat S_k^\mathrm{MDE}(x)}.
\end{align}

The parameter $\phi$ is finally determined via MDE,
\begin{align*}
    \hat\phi_k:= \arg\min_{\phi>0} \int_0^\infty [\hat r_k(x)- \varsigma_\phi(x)]^2 \diff x.
\end{align*}
\begin{figure}
    \centering
    \includegraphics[width=\linewidth]{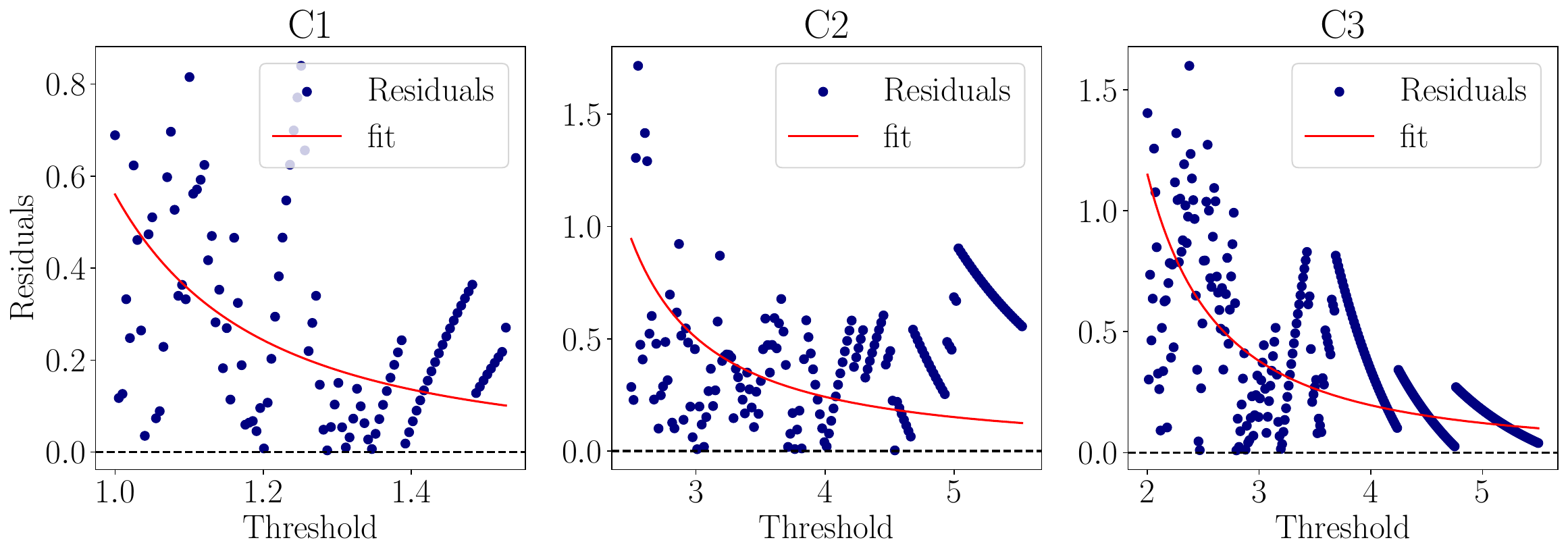}
    \caption{Residuals and fitted $\tilde\varsigma_k^2(x)$ for target C1 -- C3. The other residual fits have been postponed to Appendix \ref{sec:additional}, Figure \ref{fig:resids_p}.}
    \label{fig:resids_c}
\end{figure}

\subsection{On the choice of threshold}

All previous sections assumed a known threshold $u=u_n$. As it has to be chosen in practice, we investigate how the estimated value $\tilde S_k^\mathrm{MDE}(x)$ changes with $u$, the results are depicted in Figure \ref{fig:thchoice_c} and \ref{fig:thchoice_p}. It is particularly notable that each of the six targets shows a similar behavior with varied threshold $u$: 
\begin{itemize}
    \item For very low thresholds, the GPD is fitted to the whole bulk of data, not only to extreme exceedances. Consquently, the GPD model does not describe the data well and we see large estimated values.
    \item For intermediate thresholds, a close-to-zero prediction is visible in all six targets. This may indicate that the intermediate parts of these distributions are not heavy-tailed enough to capture the actual behaviour of extreme observations adequately.
    \item For high thresholds, a stabilizing region may be identified, over which the estimated values do not change too much. A zoomed-in version of the ``stabilized regions'' is shown in Figures \ref{fig:thchoice_c_zoom} and \ref{fig:thchoice_p_zoom}. 
\end{itemize}

As it still remains open how to choose the threshold within the stabilized region adequately, we circumvented this issue in two different ways: for the preliminary challenge, we chose a threshold $u$, such that the resulting fit appears to describe the empirical survival probabilities adequately, based on visual diagnostics. For the competition targets, we identified a stabilized region $[u_1,u_2]$ via visual diagnostics and averaged the predictions over this interval. The choices of thresholds are summarized in Table \ref{tab:summary}, together with the point estimates and confidence intervals. Note that the parameter values of the competition targets are only exemplary parameters -- they change over the interval the estimators are being averaged over.
\begin{figure}
    \centering
    \includegraphics[width=\linewidth]{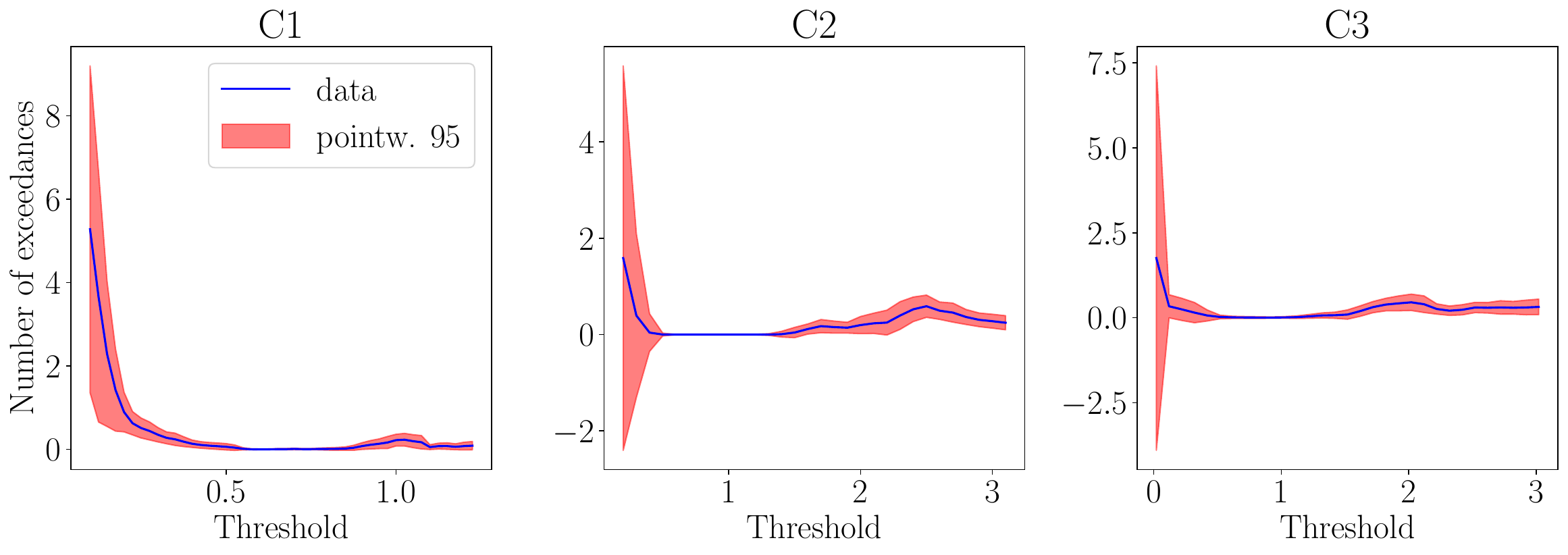}
    \caption{Development of the estimator $\hat{\mathrm{T}}_A^\mathrm{MDE}(q)$ over the threshold $u$ for target C1 -- C3. The extreme threshold $q=x+u$ is the target threshold and remains constant.}
    \label{fig:thchoice_c}
\end{figure}
\begin{figure}
    \centering
    \includegraphics[width=\linewidth]{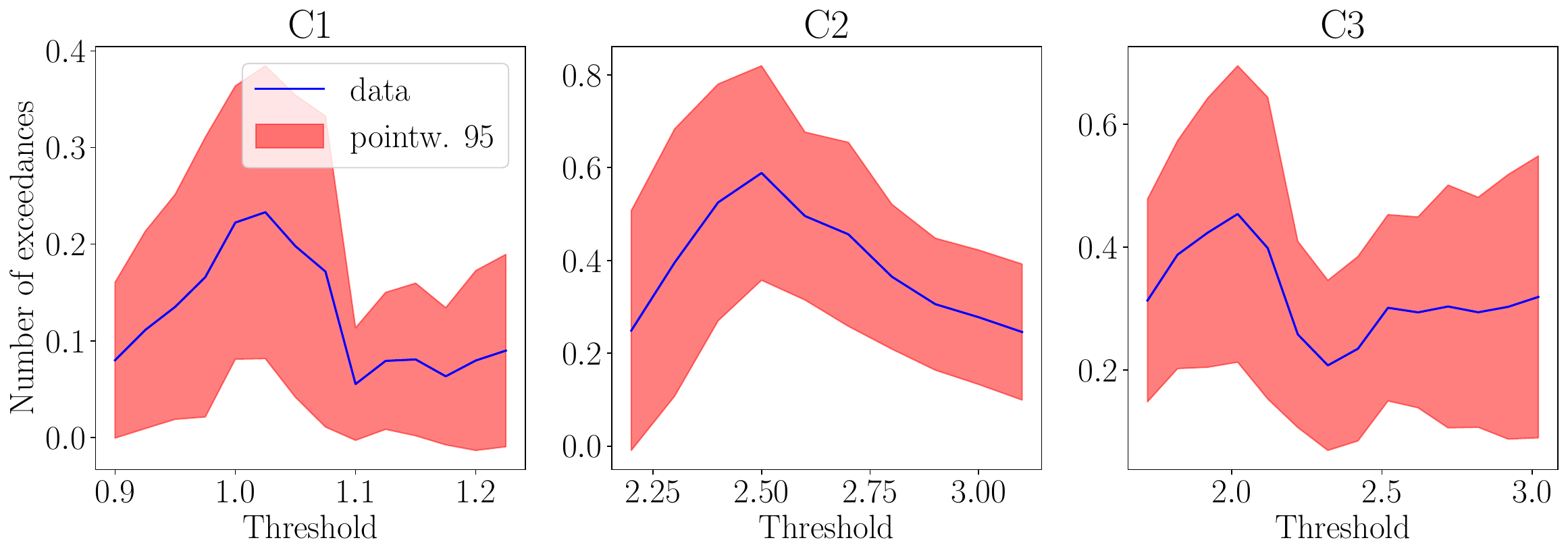}
    \caption{Zoom-in: development of the estimator $\hat{\mathrm{T}}_A^\mathrm{MDE}(q)$ over the threshold $u$ for target C1 -- C3. The extreme threshold $q=x+u$ is the target threshold and remains constant.}
    \label{fig:thchoice_c_zoom}
\end{figure}
\begin{remark}[On the choice of threshold]
    There exists a vast literature reviewing tools for choosing the threshold sequence $u_n$ in practice, motivated either by theory \cite{fougeres2015bias,scarrott2012review, northrop2017cross} 
    or by applications \cite{pan2022peaks, solari2017peaks, duran2022mixed}, among many others. The discussion about which method should be preferred remains ongoing, and no clear consensus has emerged, as recently highlighted in a contributed abstract at the EVA 2025 conference \cite{Belzile2025EVA}. We do not aim to contribute further to this debate and simply advise practitioners to adopt their preferred threshold‐selection method.
\end{remark}
\begin{table}[h!]
\centering
\caption{Summary of estimation results for targets (P1) -- (P3) and (C1) -- (C3). The target level $q=x+u$ is split into an excess $x$ and a threshold $u$, with choices of $u$ in third column. Consecutive columns contain the $L_2$-MDE estimated parameters of the $\GPD(\gamma,\mu,\sigma)$ distribution, the point estimate for the target probability and a corresponding 95\% CI.}
\renewcommand{\arraystretch}{1.5}
\label{tab:summary}

\begin{tabular}{lccccc}
\toprule
Target & $q$ & $u$ [$(u_1,u_2)$] & $(\hat{\gamma}, \hat{\mu}, \hat{\sigma})$ & Point estimate & 95\% CI\\
\midrule
P1 & 85 &  43 & (0.168, 20.71, 1.615) & 0.3218 & (0.1336, 0.5101) \\
P2 & 4.3 &  2.1 & (0.199, 0.914, 0.070) & 0.4160 & (0.2970. 0.5351) \\
P3 & 2.5 &  1.2 & (0.072, -0.0755, 0.125) & 0.1978 & (0.1080, 0.2875) \\
C1 & 1.7 &  [42, 70] & (0.257, 0.671, 0.012) & 0.1281 & (0.0186, 0.2377) \\
C2 & 5.7 &  [2, 4] & (0.383, 1.738, 0.020) & 0.4405 & (0.2064, 0.6746) \\
C3 & 5.0 &  [0.9, 1.3] & (0.302, 1.065, 0.034) & 0.3144 & (0.1291, 0.4996) \\
\bottomrule
\end{tabular}
\end{table}

\section{Conclusion}\label{sec13}

In this work, we presented a very simple approach to estimating spatio-temporal exceedance probabilities in multivariate precipitation data, as posed by the EVA 2025 Data Challenge. By reducing each task to a univariate peaks-over-threshold problem, we demonstrate how to efficiently avoid the whole field of multivariate extreme value statistics -- even though all tasks targeted this particular field of study. 

Within the peaks-over-threshold problem, we investigate an $L_2$ estimator for the parameters of a generalized Pareto distribution. Despite estimation via minimum distance is not novel at all, the particular $L_2$ distance under consideration has, to the best of our knowledge, not been studied theoretically for the GPD distribution so far. We derive asymptotic theory for the $L_2$-MDE by embedding it into the classical framework of $Z$-estimation. %

Further, we provide a detailed comparison of our $L_2$-MDE to the classical maximum likelihood estimator in the appendix. Monte Carlo simulations indicate superiority of MLE over $L_2$-MDE, thus we cautiously recommend using former in application settings.

Notwithstanding the simplicity of our approach, we managed to be competitive to stringent and complex modeling approaches from multivariate extreme value analysis, in the application to the EVA2025 data challenge.

Beyond the scope of this competition, our approach may be considered as an unconventional, but successful alternative to problems in EVT, its main idea being universally applicable to a large scope of different problems. Besides, the discussed properties of the $L_2$-MDE may shed new light on the usage of minimum distance estimators for peaks-over-threshold problems.

\subsection*{Acknowledgments}

The authors would like to express their thanks to Dan Cooley, Ben Shaby (Colorado State), Jennifer Wadsworth (Lancaster) and Emily Hector (Michigan) for organizing the data challenge of the 14$^\text{th}$ International Conference on Extreme Value Analysis 2025. 
Finally, valuable hints from and discussions with Axel Bücher are gratefully acknowledged.

\section*{Declarations}

\subsection*{Funding} This work has been supported by the integrated project ``Climate Change and Extreme Events  -- ClimXtreme Module B Statistics Phase II'' (project B3.3, grant number 01LP2323L) funded by the German Federal Ministry of Education and Research (BMBF). Further financial support by the Deutsche Forschungsgemeinschaft (DFG, German Research Foundation; Project-ID 520388526; TRR 391: Spatio-temporal Statistics for the Transition of Energy and Transport) is gratefully acknowledged.
Erik Haufs is grateful for support by the Studienstiftung des deutschen Volkes and by the Ruhr University Research School, funded by Germany’s Excellence Initiative [DFG GSC 98/3].
This work used resources of the Deutsches Klimarechenzentrum (DKRZ) granted by its Scientific Steering Committee (WLA) under project ID bb1152.

\subsection*{Conflict of interest}
The authors do not declare any conflicts of interest.

\subsection*{Data availability statement} The data supporting the findings of this study are from the EVA 2025 data challenge \cite{EVA2025DataChallenge}, and are available on request from the the organizers.

\subsection*{Author contribution} 
AB and EH participated together in the Data Challenge, conducting exploratory data analysis, simulations, and developed approaches. EH suggested the construction in Section 2. AB and EH wrote the manuscript.

\printbibliography
\appendix

\section{Proofs for Section \ref{sec:meth}}\label{sec:proofs}

\begin{proof}[Proof of Lemma \ref{lem:write_as_z}]
Recall that the gradient $\nabla J_k$ vanishes for any local minimizer. Thus, we will consecutively calculate the gradient.
Denote by $f_\theta(x)$ the density of a GPD($\theta)$, i.e.,
\[
    f_\theta(x) = \frac{1}{\sigma}\left(1+\frac{\gamma}{\sigma} x\right)^{-\frac{1}{\gamma} -1},
\]
For any $x\geq 0$
, the partial derivatives of $F_\theta$ are given by:
\begin{align*}
    &\frac{\partial}{\partial \gamma} F_\theta(x) = -\left(1+\frac{\gamma}{\sigma}x\right)^{-1/\gamma} \left(\frac{\log\left( 1+\frac{\gamma x}{\sigma}\right)}{\gamma^2} - \frac{\frac{1}{\gamma\sigma}x}{1+\frac{\gamma}{\sigma} x}\right) \\&\phantom{\frac{\partial}{\partial \gamma} F_\theta(x) {}}= -\frac{1}{\gamma^2\sigma} f_\theta(x) \left( (1+\gamma \frac x\sigma) (\log\left( 1+\gamma \frac x\sigma\right) -1) \right), \\
    &\frac{\partial}{\partial \sigma} F_\theta(x) = -\frac{x}{\sigma^2} \left(1+\frac{\gamma}{\sigma} x\right)^{-1/\gamma -1} = -\frac{1}{\sigma} x f_\theta(x).
\end{align*}
To compute the gradient of $J_k$, we want to swap the integral and the derivative sign. Since $\Theta$ is compact, there exist constants $0<\gamma_{\min} < \gamma < \gamma_{\max} < 1$ and $0 < \sigma_{\min} < \sigma < \sigma_{\max} < \infty$ such that the rectangle $(\gamma_{\min}, \gamma_{\max}) \times (\sigma_{\min}, \sigma_{\max})$ contains $(\gamma, \sigma)$ and lies within a neighborhood of $\Theta$. For the partial derivative with respect to $\gamma$, observe that for any $\theta \in \Theta$, 
\[
    \left| \frac{\partial}{\partial \gamma} F_\theta(x) \right| \le g_1(x),
\]
with,
\[
    g_1(x) = \left( 1 + \frac{\gamma_{\max}}{\sigma_{\max}} x \right)^{-1/\gamma_{\max}} 
\left( 
\frac{ \log\left( 1 + \frac{\gamma_{\max}}{\sigma_{\min}} x \right) }{ \gamma_{\min}^2 } 
+ 
\frac{ \frac{x}{\gamma_{\min} \sigma_{\min}} }{ 1 + \frac{\gamma_{\min}}{\sigma_{\max}} x } 
\right),
\]
where we use that $\gamma \mapsto (1+{\gamma x}/{\sigma})^{-1/\gamma}$ is non-decreasing 
and $\sigma > 0$. This can be seen by defining
\[
A(\gamma, \sigma, x) := \left(1 + \frac{\gamma x}{\sigma} \right)^{-1/\gamma}, \qquad
B(\gamma, \sigma, x) := \frac{\log\left( 1 + \frac{\gamma x}{\sigma} \right)}{\gamma^2}
- \frac{ \frac{x}{\gamma \sigma} }{ 1 + \frac{\gamma x}{\sigma} }.
\]
Then:
\[
\left| \frac{\partial}{\partial \gamma} F_\theta(x) \right| = |A(\gamma, \sigma, x) \cdot B(\gamma, \sigma, x)| \le A(\gamma, \sigma, x) \cdot |B(\gamma, \sigma, x)|.
\]
We now bound both terms. Due to $\gamma \mapsto (1+{\gamma x}/{\sigma})^{-1/\gamma}$ being non-decreasing, 
\[
    A(\gamma, \sigma, x) \le \left(1 + \frac{\gamma_{\max}}{\sigma_{\max}} x \right)^{-1/\gamma_{\max}}.
\]
Second, for $B(\gamma, \sigma, x)$, observe that
    \[
        B(\gamma, \sigma, x) = \frac{1}{\gamma^2} \phi\left(1+\frac{\gamma}{\sigma}x\right),
    \]
    where
    \[
\phi \colon (0,\infty) \to (0,\infty), \quad x \mapsto \phi(x) = \log(x) -1 + \frac{1}{x}.
    \]
    Since $\phi(\cdot)$ is increasing on $(1,\infty)$ we obtain by noticing that $(1+{\gamma x}/{\sigma}) \le (1+{\gamma_{\max} x}/{\sigma_{\min}})$:
    \[
    |B(\gamma, \sigma, x)| \le 
    \frac{ \log\left(1 + {\gamma_{\max}\cdot x}/{\sigma_{\min}} \right) }{ \gamma_{\min}^2 }
    + 
    \frac{ {x}/{(\gamma_{\min}\cdot \sigma_{\min})} }{ 1 + {\gamma_{\min}\cdot x/\sigma_{\max}} }.
    \]   
Combining the bounds, we conclude:
\[
\left| \frac{\partial}{\partial \gamma} F_\theta(x) \right| \le g_1(x),
\]
where:
\[
g_1(x) := \left(1 + \frac{\gamma_{\max}}{\sigma_{\max}} x \right)^{-1/\gamma_{\max}} 
\left( 
\frac{ \log\left( 1 + \frac{\gamma_{\max}}{\sigma_{\min}} x \right) }{ \gamma_{\min}^2 } 
+ 
\frac{ \frac{x}{\gamma_{\min} \sigma_{\min}} }{ 1 + \frac{\gamma_{\min}}{\sigma_{\max}} x } 
\right).
\]
We now show that $g_1(x)$ is integrable. Indeed
\[
    \int_0^\infty \left(1+\frac{\gamma_{\max}}{\sigma_{\max}} x \right)^{-1/\gamma_{\max}} \log\left( 1 + \frac{\gamma_{\max}}{\sigma_{\min}} x \right) \diff x = \frac{\sigma_{\max}\gamma_{\max}}{(\gamma_{\max}-1)^2} < \infty,
\]
where $0<\gamma_{\max} < 1$. Also,
\[
    \int_0^\infty \left( 1 + \frac{\gamma_{\max}}{\sigma_{\max}} x \right)^{-1/\gamma_{\max}} 
    \left(\frac{ \frac{x}{\gamma_{\min} \sigma_{\min}} }{ 1 + \frac{\gamma_{\min}}{\sigma_{\max}} x } 
    \right)\diff x < \infty.
\]
Same arguments applies for $\frac{\partial}{\partial \sigma} F_{\theta}$, therefore, by Lebesgue's theorem, we may interchange integration and differentiation and obtain $\hat\theta_k^\mathrm{MDE}$ as the (possibly not unique) root of
\begin{align*}
    \partial_\theta J_k(\theta)=\int_0^\infty \frac{\partial}{\partial \theta}[\hat F_k(x)-F_\theta(x)]^2~\mathrm{d}~\!x
\end{align*}
We conclude 
\begin{align}
    \psi(x,\theta) = \int_0^x \left\{ \frac{\partial}{\partial \theta} F_\theta(y) \right\} \diff y - \int_0^\infty \int_0^u \left\{ \frac{\partial}{\partial \theta} F_\theta(y) \right\} \diff y \diff F_\theta(u) \label{eq:psi_after_partint}
\end{align}

by partial integration in analogy to \cite{Clarke1994}. The assertion follows from evaluating the appearing integrals, see also Appendix \ref{app:mathematica}: the inner integrals are
\begin{align*}
    \int_0^x \left\{ \frac{\partial}{\partial \gamma} F_\theta(y) \right\} \diff y &= \frac{\left(\frac{\sigma }{\sigma +\gamma  x}\right)^{1/\gamma } \left(\gamma  (\gamma  \sigma +(2 \gamma -1) x)-(\gamma -1) (\sigma +\gamma  x) \log \left(\frac{\gamma  x}{\sigma }+1\right)\right)-\gamma ^2 \sigma }{(\gamma -1)^2 \gamma ^2}, \\
    \int_0^x \left\{ \frac{\partial}{\partial \sigma} F_\theta(y) \right\} \diff y &=-\frac{(\sigma +x) \left(\frac{\sigma }{\sigma +\gamma  x}\right)^{1/\gamma }-\sigma }{(\gamma -1) \sigma },
\end{align*}
whereas the outer integrals yield
\begin{align*}
    \int_0^\infty \int_0^u \left\{ \frac{\partial}{\partial \gamma} F_\theta(y) \right\} \diff y \diff F_\theta(u) &=-\frac{\sigma }{2 (\gamma -2)^2},\\
    \int_0^\infty \int_0^u \left\{ \frac{\partial}{\partial \sigma} F_\theta(y) \right\} \diff y \diff F_\theta(u) &=\frac{1}{2 (\gamma -2)}.
\end{align*}

\end{proof}

\begin{proof}[Proof of Theorem \ref{cor:consist_theta}]
    This is a direct consequence of \cite[Theorem 5.7]{Vaart1998}, it remains to verify its conditions. First, we need to show that 
    \begin{align*}
        \sup_{\theta\in\Theta} \|J_k(\theta)-J(\theta)\|\xlongrightarrow{\P}0.
    \end{align*}
    For this, we have by Cauchy-Schwartz and the triangle inequality
    \begin{align*}
         |J_k(\theta)-J(\theta)| &= \Big| \int_0^\infty (\hat S_k(x) - S_\theta(x))^2-(S_{\theta_0}(x)-S_\theta(x))^2\diff x\Big|\\
         &= \Big| \int_0^\infty (\hat S_k(x) - S_{\theta_0}(x))\big[\hat S_k(x)+S_{\theta_0}(x)-2S_\theta(x)\big]\diff x\Big|\\
         &=\Big\langle \hat S_k - S_{\theta_0},~\hat S_k+S_{\theta_0}-2S_\theta\Big\rangle_{\mathds L_2}^2 \\
         &\leq \big\|\hat S_k - S_{\theta_0}\big\|_{\mathds L_2}^2 \cdot\big\|\hat S_k-S_{\theta_0}+2[S_{\theta_0}-S_\theta]\big\|_{\mathds L_2}^2 \\
         &\leq \big\|\hat S_k - S_{\theta_0}\big\|_{\mathds L_2}^2 \cdot \Big\{\big\|\hat S_k-S_{\theta_0}\big\|_{\mathds L_2}+2\big\|S_{\theta_0}-S_\theta\big\|_{\mathds L_2}\Big\}^2.
    \end{align*}
    Consequently,
    \begin{align*}
        \sup_{\theta\in\Theta} |J_k(\theta)-J(\theta)| &\leq \big\|\hat S_k - S_{\theta_0}\big\|_{\mathds L_2}^2 \cdot \Big\{\big\|\hat S_k-S_{\theta_0}\big\|_{\mathds L_2}+2\sup_{\theta\in\Theta} \big\|S_{\theta_0}-S_\theta(x)\big\|_{\mathds L_2}\Big\}^2.
    \end{align*}
    For fixed $\theta_0\in\Theta$, we have
    \begin{align*}
        \sup_{\theta\in\Theta} \big\|S_{\theta_0}-S_\theta\big\|_{\mathds L_2}^2=\sup_{\theta\in\Theta}\int_0^\infty [S_{\theta_0}(x)-S_\theta(x)]^2 \diff x=:C_{\theta_0}<\infty,
    \end{align*}
    where the latter holds due to the compactness of $\Theta\subset (0,1)\times (0,\infty)$.
    Now, by Fubini,
    \begin{align*}
        \E\Big[\big\|\hat S_k - S_{\theta_0}\big\|_{\mathds L_2}^2\Big] &= \int_0^\infty \E\bigg[\Big[\frac{1}{k}\sum_{j=1}^k \indic(Z_j>x) - S_{\theta_0}(x)\Big]^2\bigg]\diff x 
        = \int_0^\infty \mathds V\mathrm{ar}\Big[\frac{1}{k}\sum_{j=1}^k \indic(Z_j>x) \Big]\diff x \\
        &= \frac1k \int_0^\infty S_{\theta_0}(x)[1-S_{\theta_0}(x)]\diff x 
        =:\frac{V_{\theta_0}}{k}<\infty,
    \end{align*}
    as $\hat S_k(x)$ is a rescaled $\mathrm{Bin}(k,S_{\theta_0}(x))$ random variable, pointwise. The latter $V_{\theta_0}<\infty$ again holds due to $\Theta\subset (0,1)\times (0,\infty)$. By Chebyshev's inequality,
    \begin{align*}
        \big\|\hat S_k - S_{\theta_0}\big\|_{\mathds L_2}^2=o_\P(1).
    \end{align*}
    We conclude 
    \begin{align*}
        \sup_{\theta\in\Theta} |J_k(\theta)-J(\theta)| &\leq o_\P(1)\cdot\big\{o_\P(1)+2\sqrt{C_{\theta_0}}\big\}^2=o_\P(1),
    \end{align*}
    so that the claimed convergence in probability is indeed fulfilled.
It remains to check the second condition
    \begin{align*}
        \inf_{\theta:~d(\theta,\theta_0)\ge \epsilon} \|J(\theta)\|>0=\|J(\theta_0)\|,
    \end{align*}
    which directly follows from the identifiability of the GPD.
    
    We conclude the second assertion by applying the continuous mapping theorem.
\end{proof}

\begin{proof}[Proof of Theorem \ref{thm:normal_theta}]
    We apply \cite[Theorem 5.21]{Vaart1998}. Having established Lemma \ref{lem:write_as_z}, the well-behaved derivatives of $\psi$ and the strong assumptions on $\Theta$ make most conditions easy to verify, see sufficient conditions in \cite[Theorem 5.41]{Vaart1998}. The only condition requiring major effort is the boundedness of the second-order partial derivatives of $\psi$ in a neighborhood around $\theta_0$, which we show later. It remains to calculate the limiting covariance matrix $\Sigma_{\theta_0}$. Following \cite{Vaart1998}, we have $\Sigma_{\theta} = U^{-1}_\theta V_\theta (U_\theta^\top)^{-1}$ with
    \begin{align*}
        U_\theta = \E\begin{bmatrix}
            \partial_\gamma\psi_\gamma(Y,\theta)& \partial_\gamma\psi_\sigma(Y,\theta) \\
            \partial_\sigma\psi_\gamma(Y,\theta)& \partial_\sigma\psi_\sigma(Y,\theta) 
        \end{bmatrix}
        ,&&
        V_\theta=\E\begin{bmatrix}
            \psi_\gamma(Y,\theta)^2 & \psi_\gamma(Y,\theta)\psi_\sigma(Y,\theta) \\
            \psi_\gamma(Y,\theta)\psi_\sigma(Y,\theta) & \psi_\sigma(Y,\theta)^2
        \end{bmatrix}.
    \end{align*}
    Evaluating these matrices, we find
    \begin{align*}
        U_\theta = \begin{pmatrix}
            \frac{(\gamma -6) \sigma }{2 (\gamma -2)^3 (\gamma +2)}&\frac{6-\gamma}{4 (\gamma -2)^2 (\gamma +2)}\\
            \frac{6-\gamma}{4 (\gamma -2)^2 (\gamma +2)}&\frac{1}{4 \sigma -\gamma ^2 \sigma }
        \end{pmatrix}.
    \end{align*}
    and
    \begin{align*}
        V_\theta &= \begin{pmatrix}
            \frac{\left(8 \gamma ^5-148 \gamma ^4+918 \gamma ^3-2587 \gamma ^2+3416 \gamma -1719\right) \sigma ^2}{12 (\gamma -3)^2 (\gamma -2)^4 (2 \gamma -3)^3} &  \frac{(\gamma  (\gamma  (-4 (\gamma -15) \gamma -285)+548)-369) \sigma }{12 (\gamma -2)^3 \left(2 \gamma ^2-9 \gamma +9\right)^2} \\
            \frac{(\gamma  (\gamma  (-4 (\gamma -15) \gamma -285)+548)-369) \sigma }{12 (\gamma -2)^3 \left(2 \gamma ^2-9 \gamma +9\right)^2} & \frac{\gamma  (2 \gamma -17)+29}{12 (\gamma -3) (\gamma -2)^2 (2 \gamma -3)}
        \end{pmatrix}.
    \end{align*}
    Concluding, {
    \begin{align}
        \Sigma_{\theta} &= U^{-1}_\theta V_\theta (U_\theta^\top)^{-1} \notag\\\label{eq:covvals}
        &= \left(
\begin{array}{cc}
 \frac{4 (\gamma -2)^4 \left(\gamma  \left(\gamma  \left(2 \gamma  \left(4 \gamma ^2-58 \gamma +243\right)-683\right)+452\right)-639\right)}{3 (\gamma -6)^2 (\gamma -3)^2 (2 \gamma -3)^3} & \frac{4 (\gamma -2)^2 \left(\gamma  \left(\gamma  \left(2 \gamma  \left(4 \gamma ^2-50 \gamma +207\right)-791\right)+778\right)-387\right) \sigma }{3 (\gamma -6) (\gamma -3)^2 (2 \gamma -3)^3} \\
 \frac{4 (\gamma -2)^2 \left(\gamma  \left(\gamma  \left(2 \gamma  \left(4 \gamma ^2-50 \gamma +207\right)-791\right)+778\right)-387\right) \sigma }{3 (\gamma -6) (\gamma -3)^2 (2 \gamma -3)^3} & \frac{4 \left(\gamma  \left(\gamma  \left(8 \gamma ^3-84 \gamma ^2+374 \gamma -843\right)+944\right)-423\right) \sigma ^2}{3 (\gamma -3)^2 (2 \gamma -3)^3} \\
\end{array}
\right)
    \end{align}}
Details of this calculation have been performed in \texttt{Mathematica} \cite{Mathematica}, which is elaborated in Appendix \ref{app:mathematica}. The entries of $\Sigma_\theta$ for selected values of $\theta$ are depicted in Figure \ref{fig:cov_entries}.

It remains to show the second assertion, to which we apply the delta method. As $\nabla_\theta S_0(x)\neq 0$ for fixed $x$, the convergence
\begin{align*}
    \sqrt{k}\big(\hat S_\mathrm{MDE}(x)-S_{0}(x)\big) \xlongrightarrow{\mathcal D} \mathbb G(x):= \nabla_\theta S_0(x)^\top Z
\end{align*}
obviously holds. The covariance structure of $\mathbb G(x)$ is a direct consequence. Finally, the explicit structure of $\varsigma_\theta^2(x)$ immediately follows from evaluating $\varsigma_\theta^2(x)= \nabla_\theta F_\theta(x)^\top \Sigma_\theta \nabla_\theta F_\theta(x)$. \\

\textit{Proof of the boundedness of the second-order partial derivatives.} Within the interior of $\Theta$, the second order derivatives are continuous in a neighborhood around $\theta_0$. It therefore suffices to show the integrability of the second-order partial derivatives at $\theta_0$. We have
\begin{align*}
    \int_0^\infty \frac{\partial^2}{\partial\sigma^2} \psi(x,\theta)\diff F_\theta(x) &=-\frac{1}{(\gamma ^2-4) \sigma }\\
    \int_0^\infty \frac{\partial^2}{\partial\sigma\partial\gamma} \psi(x,\theta)\diff F_\theta(x) &=-\frac{ G_{3,3}^{3,2}\left(1\left|
\begin{array}{c}
 -2,-\frac{2}{\gamma },-1 \\
 -2,-2,0 \\
\end{array}
\right.\right)}{ (\gamma -1) \gamma ^4 \Gamma (1+2/\gamma)}-\frac{\left(\gamma ^3-6 \gamma ^2+4 \gamma -8\right) }{4 (\gamma -1) \gamma  (\gamma -2)^2 (\gamma +2)} \\
    \int_0^\infty \frac{\partial^2}{\partial\gamma^2} \psi(x,\theta)\diff F_\theta(x) &=\frac{2\left(\gamma ^2+\gamma -1\right)}{ (\gamma -1)^2 \gamma ^4\Gamma(2/\gamma )} G_{3,3}^{3,2}\left(1\left|
\begin{array}{c}
 -2,-\frac{2}{\gamma },-1 \\
 -2,-2,0 \\
\end{array}
\right.\right)\sigma\\
&\phantom{{}={}}+\frac{(2 \gamma ^5-11 \gamma ^4+4 \gamma ^3+10 \gamma ^2+20 \gamma -16)}{2 (\gamma -1)^2 \gamma ^2(\gamma -2)^3 (\gamma +2)}\sigma,
\end{align*}
where $G_{3,3}^{2,3}$ denotes the Mejier-$G$ function. As all previous expressions are finite for all $\theta\in\Theta$, we conclude the assertion.
\end{proof}

\begin{figure}[!htp]
    \centering
    \includegraphics[width=\linewidth]{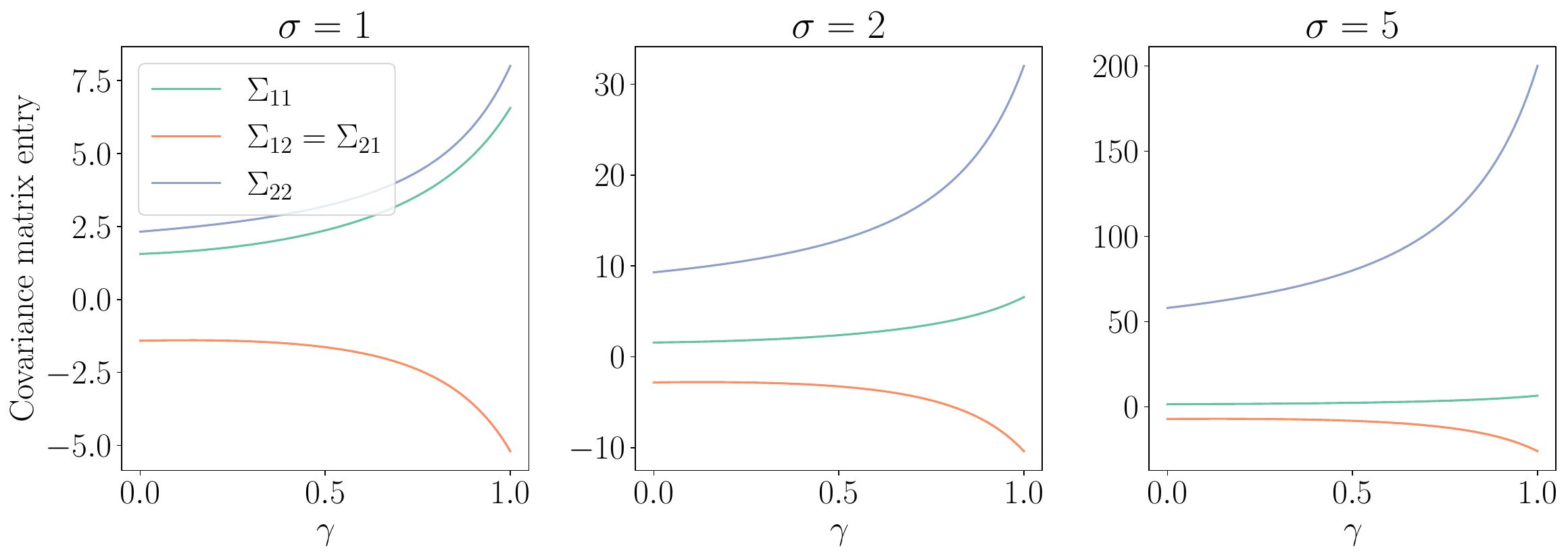}
    \caption{Entries of the covariance matrix $\gamma\mapsto\Sigma_\theta$ for $\sigma=1,2,5$, corresponding to the display \eqref{eq:covvals}. The absolute values $(\Sigma_\theta)_{ij}$ grow with $\gamma$ and $\sigma$.}
    \label{fig:cov_entries}
\end{figure}

\section{Efficiency comparison to MLE}\label{sec:efficiency}
A common estimator of the GPD parameters in the peaks-over-threshold method is the maximum-likelihood estimator (MLE). We therefore treat this well-known estimator as a benchmark to compare our MDE against. By standard assumptions, the MLE $\hat \theta^\mathrm{MLE}$ is asymptotically normal with limiting covariance being equal to the inverse Fisher information,
\begin{align*}
    \sqrt{k}(\hat \theta^\mathrm{MLE}_k-\theta_0)\xlongrightarrow{\mathcal{D}} \mathcal{N}(0,\Sigma_{\theta_0}^\mathrm{MLE}) 
\end{align*}
with
\begin{align*}
    \Sigma_\theta^\mathrm{MLE}= \left(
\begin{array}{cc}
 (\gamma +1)^2 & -((\gamma +1) \sigma ) \\
 -((\gamma +1) \sigma ) & 2 (\gamma +1) \sigma ^2
\end{array}
\right).
\end{align*}
Consequently, evaluating $\varsigma^{2,\mathrm{MLE}}_\theta(x)= \nabla_\theta F_\theta(x)^\top \Sigma_\theta \nabla_\theta F_\theta(x)$, we see

\begin{align*}
    \gamma ^4 (\sigma +\gamma  x)^2\varsigma^{2,\mathrm{MLE}}_\theta(x)&= (\gamma +1) \left(\frac{\gamma  x}{\sigma }+1\right)^{-2/\gamma }\bigg[(\gamma +1) (2 \gamma +1) \gamma ^2 x^2+(\sigma +\gamma  x) \log \left(\frac{\gamma  x}{\sigma }+1\right)\\ 
    &\phantom{=}\cdot \Big\{((\gamma +1) (\sigma +\gamma  x) \log \left(\frac{\gamma  x}{\sigma }+1\right)-2 \gamma  (2 \gamma +1) x\Big\}\bigg].
\end{align*}

By the Cram\'er-Rao-bound \cite[Pages 479-480]{cramer1946mathematical}, we cannot expect to achieve $\varsigma^{2}_\theta(x)<\varsigma^{2,\mathrm{MLE}}_\theta(x)$, but we can nevertheless compare these estimators in terms of asymptotic variance. Investigating Figure \ref{fig:eff_var}, we observe only small differences in limiting variance between MLE and MDE. This is more illustrative in the relative efficiency  $\varsigma^{2}_\theta(x)/\varsigma^{2,\mathrm{MLE}}_\theta(x)$, which appears to be well below 2 for all choices of $\gamma$, see Figure \ref{fig:eff_rel}. Of particular interest is the limiting behavior $\displaystyle \lim_{x\to 0} \varsigma^{2,\mathrm{MLE}}_\theta(x)/\varsigma^{2}_\theta(x)$ and $\displaystyle \lim_{x\to \infty} \varsigma^{2,\mathrm{MLE}}_\theta(x)/\varsigma^{2}_\theta(x)$, as
\begin{align*}
    \lim_{x\to 0} \varsigma^{2,\mathrm{MLE}}_\theta(x)/\varsigma^{2}_\theta(x)&=\frac{2 \left(8 \gamma ^5-84 \gamma ^4+374 \gamma ^3-843 \gamma ^2+944 \gamma -423\right)}{3 (\gamma -3)^2 (\gamma +1) (2 \gamma -3)^3} \\
    \lim_{x\to \infty} \varsigma^{2,\mathrm{MLE}}_\theta(x)/\varsigma^{2}_\theta(x)&=-\frac{4 (\gamma -2)^4 \left(\gamma  \left(\gamma  \left(2 \gamma  \left(4 \gamma ^2-58 \gamma +243\right)-683\right)+452\right)-639\right)}{3 (3-2 \gamma )^3 (\gamma -6)^2 (\gamma -3)^2 (\gamma +1)^2}.
\end{align*}

Notably, these limits are independent from $\sigma$ and never exceed 2 for all choices of $\gamma$, as it is visualized in Figure \ref{fig:eff_rel_limit}.

\begin{figure}[!htp]
    \centering
    \includegraphics[width=\linewidth]{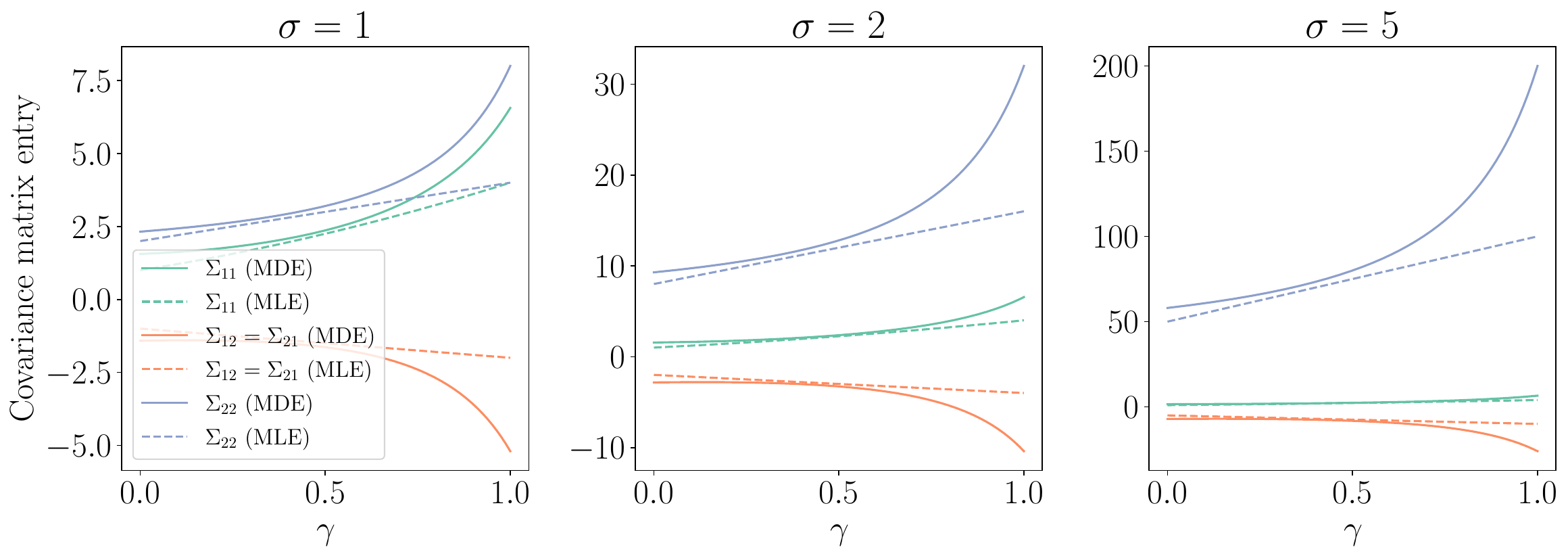}
    \caption{Entries of the limiting covariance matrices $\Sigma_\theta^\mathrm{MLE}$ and $\Sigma_\theta$ as a function of $\gamma$, shown for $\sigma=1,2,5$ (from left to right). The diagonal entries of $\Sigma_\theta^\mathrm{MLE}$ are below those of $\Sigma_\theta$.}
    \label{fig:eff_cov}
\end{figure}

\begin{figure}[!htp]
    \centering
    \includegraphics[width=\linewidth]{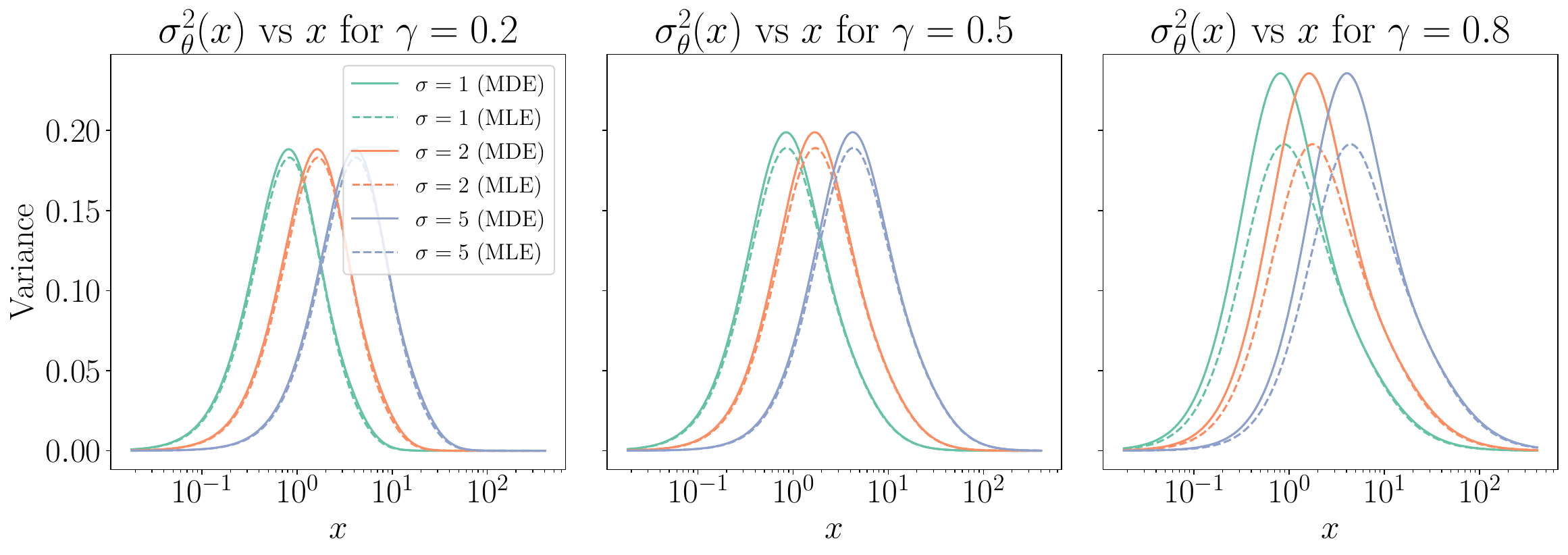}
    \caption{Variances of the survival function, $\varsigma^{2}_\theta(x),\varsigma^{2,\mathrm{MLE}}_\theta(x)$. The variances approach zero for $x\to\infty$ and $x\to 0$, with their peak depending on $\gamma$ and $\sigma$.}
    \label{fig:eff_var}
\end{figure}

\begin{figure}[!htp]
    \centering
    \includegraphics[width=\linewidth]{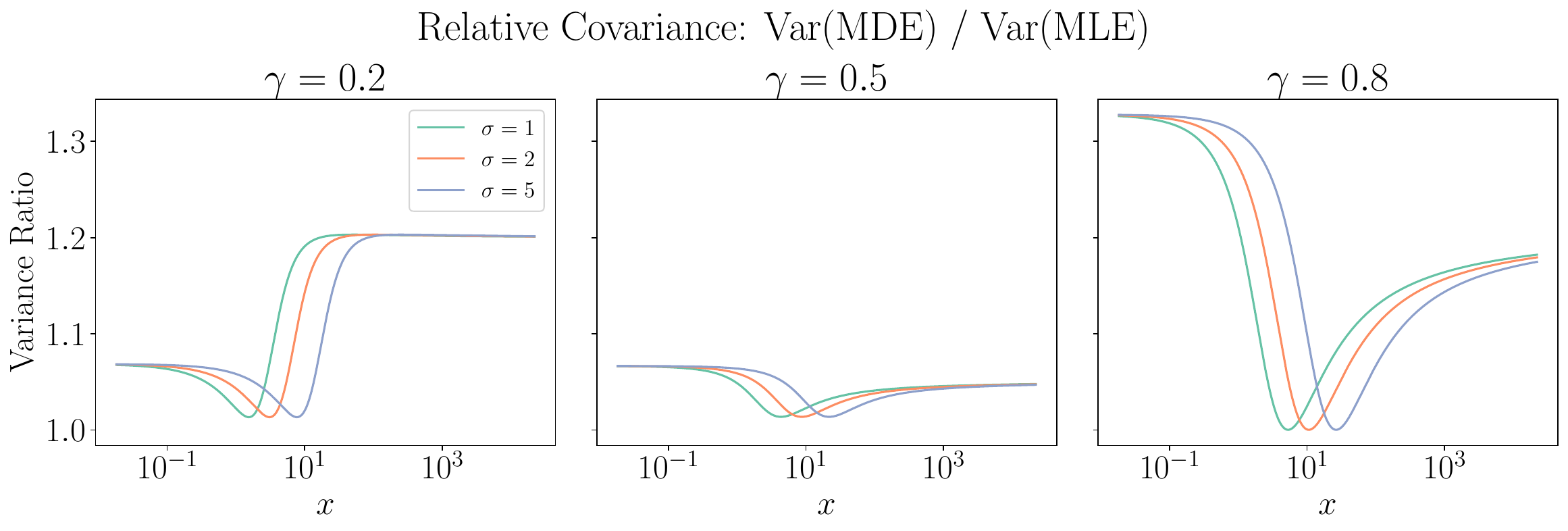}
    \caption{Relative variances of the survival function, $\varsigma^{2}_\theta(x)/\varsigma^{2,\mathrm{MLE}}_\theta(x)$ for particular choices of $\gamma=0.2,0.5,0.8$ and $\sigma=1,2,5$. Note that $\sigma$ only shifts the ratio horizontally, leaving the limits $x\to 0$ and $x\to\infty$ unaffected.}
    \label{fig:eff_rel}
\end{figure}

\begin{figure}[!htp]
    \centering
    \includegraphics[width=\linewidth]{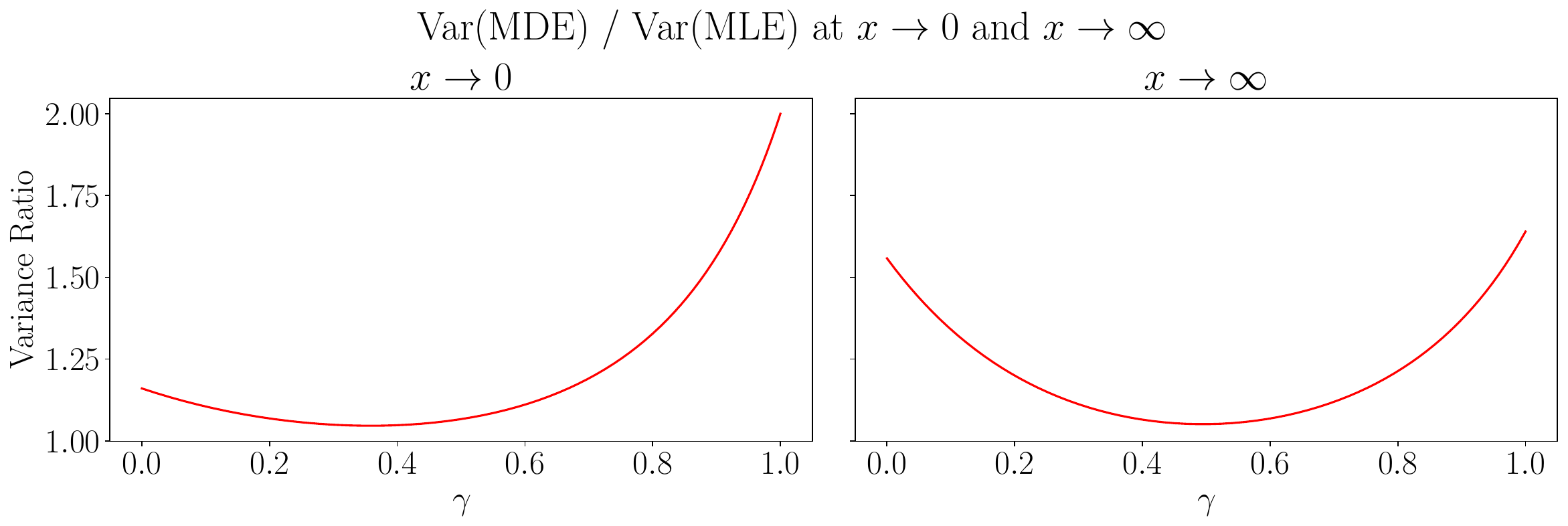}
    \caption{Limiting ratios $\displaystyle \lim_{x\to 0} \varsigma^{2,\mathrm{MLE}}_\theta(x)/\varsigma^{2}_\theta(x)$ and $\displaystyle \lim_{x\to \infty} \varsigma^{2,\mathrm{MLE}}_\theta(x)/\varsigma^{2}_\theta(x)$ as a function of $\gamma$. Note that the ratio never exceeds 2.}
    \label{fig:eff_rel_limit}
\end{figure}

\pagebreak

\section{Additional results for the challenge}\label{sec:additional}

In this section, we provide additional results on the data challenge. All plots (Figures \ref{fig:fits_p} -- \ref{fig:thchoice_p_zoom}) have their correspondence to a Figure in the main paper. For the sake of brevity, we therefore omit a lengthy discussion of each.

\begin{figure}[!htp]
    \centering
    \includegraphics[width=\linewidth]{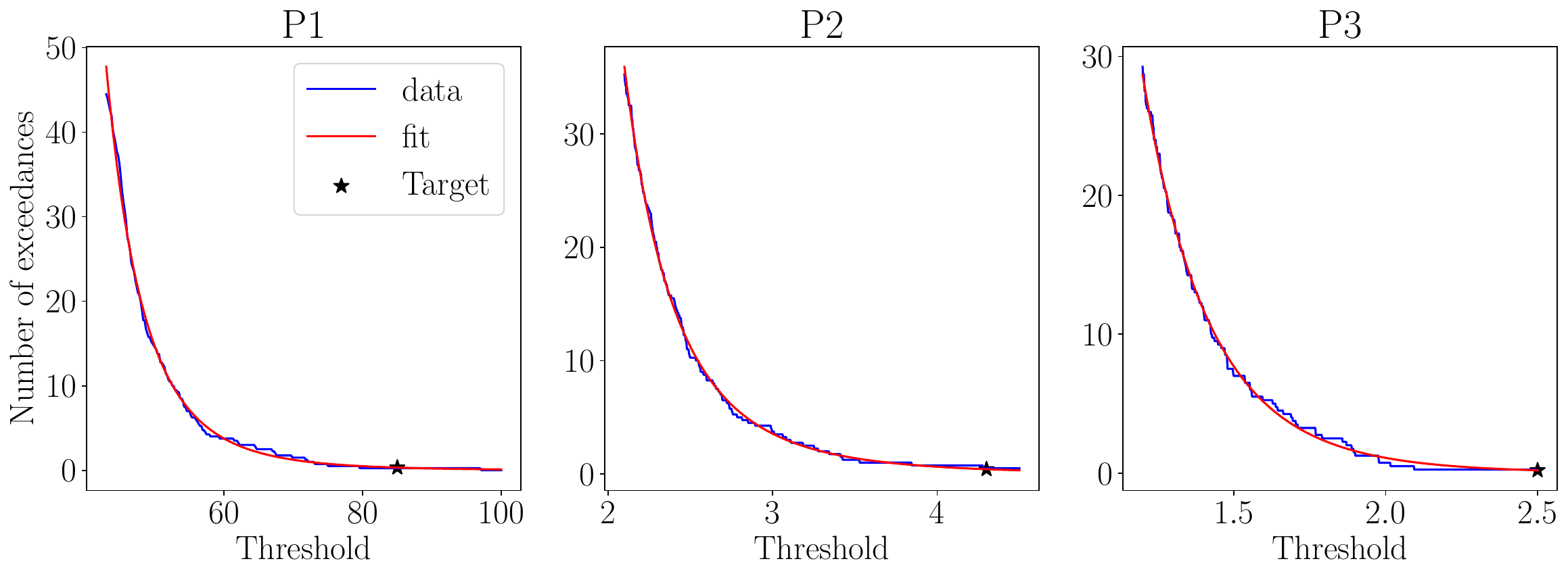}
    \caption{Average of empirical survival functions $\hat S_k(x)$ (blue) and corresponding MDE fits $\hat S_k^\mathrm{MDE}(x)$ (red), targets P1 -- P3. The black $\star$ denotes the target of the respective challenge.}
    \label{fig:fits_p}
\end{figure}

\begin{figure}[!htp]
    \centering
    \includegraphics[width=\linewidth]{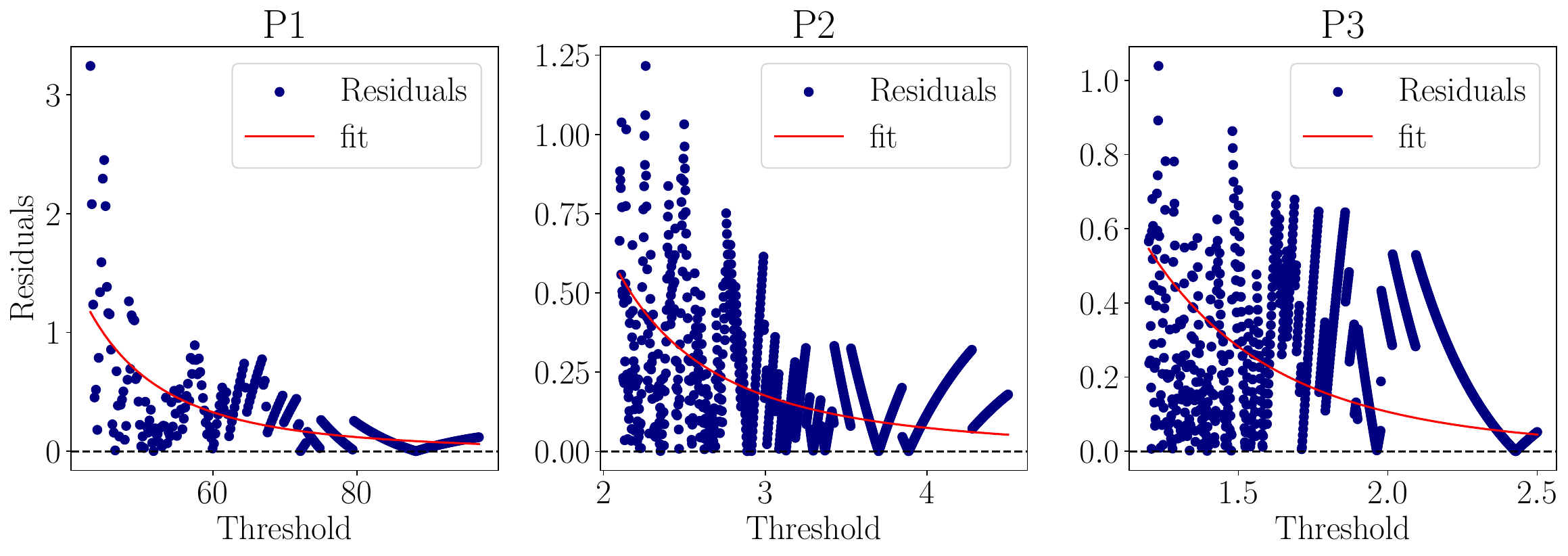}
    \caption{Residuals and fitted $\tilde\varsigma_k^2(x)$ for target P1 -- P3.}
    \label{fig:resids_p}
\end{figure}

\begin{figure}[!htp]
    \centering
    \includegraphics[width=\linewidth]{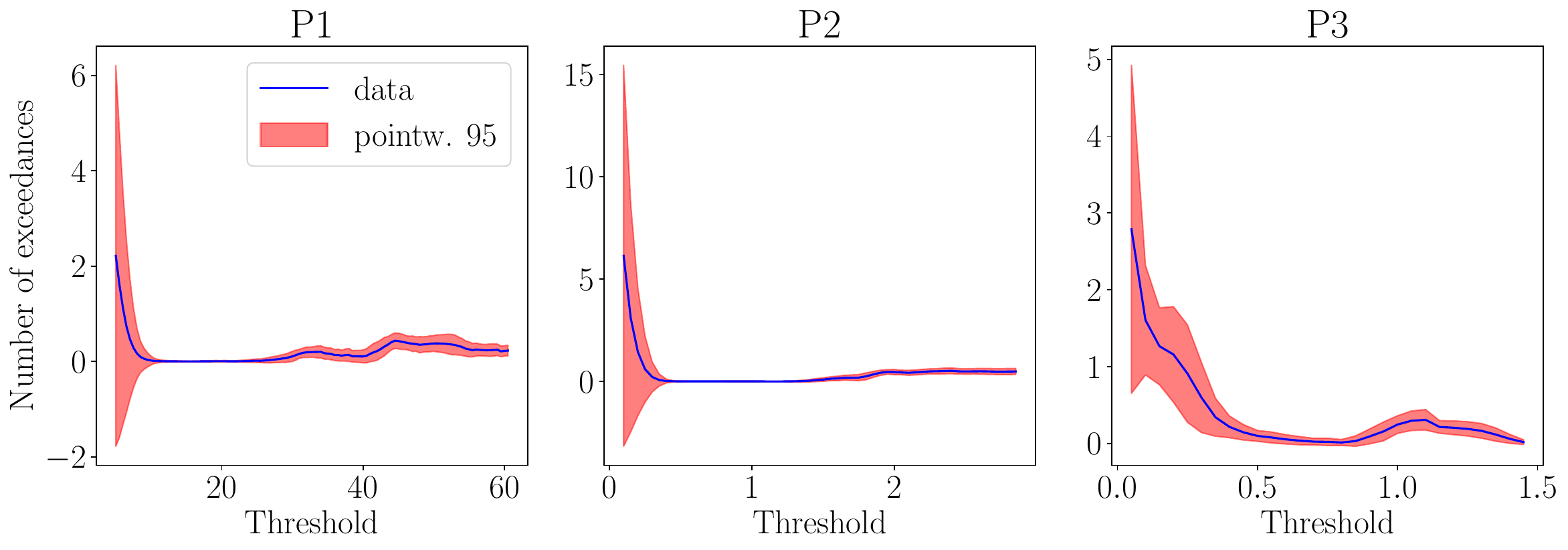}
    \caption{Development of the estimator $\hat S^\mathrm{MDE}(q)$ over the threshold $u$ for target P1 -- P3. The extreme threshold $q=x+u$ is the target threshold and remains constant.}
    \label{fig:thchoice_p}
\end{figure}

\begin{figure}[!htp]
    \centering
    \includegraphics[width=\linewidth]{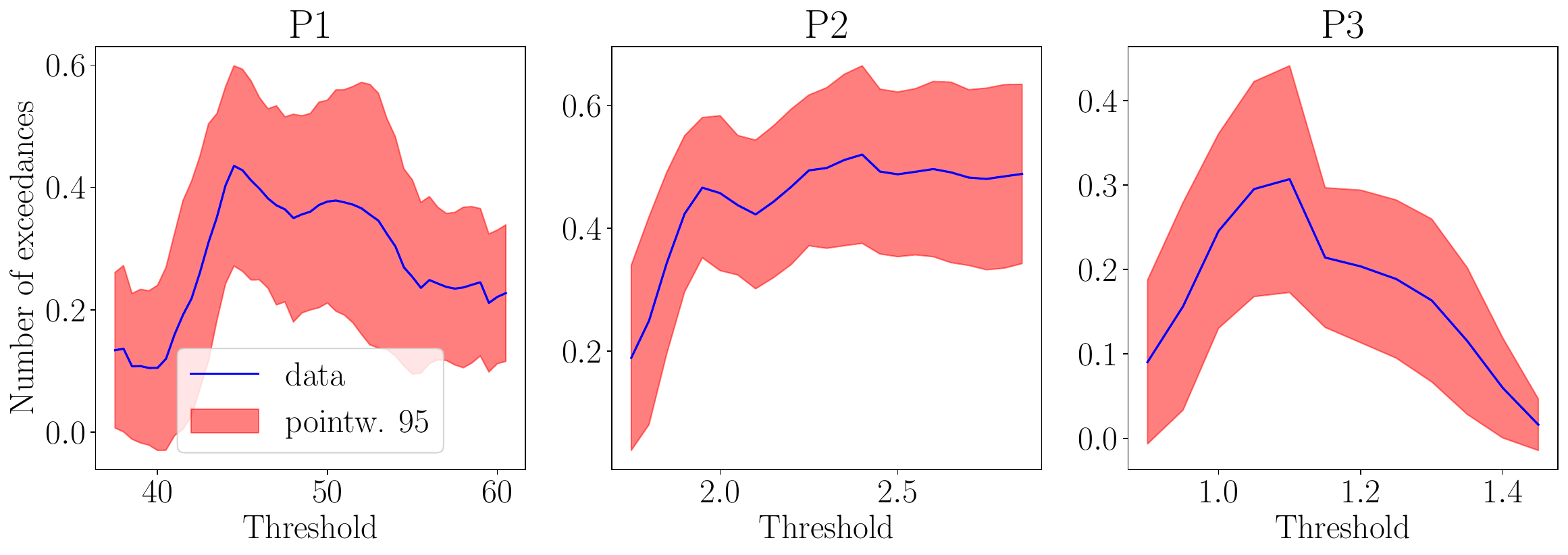}
    \caption{Zoom-in: development of the estimator $\hat S^\mathrm{MDE}(q)$ over the threshold $u$ for target C1 -- C3. The extreme threshold $q=x+u$ is the target threshold and remains constant.}
    \label{fig:thchoice_p_zoom}
\end{figure}

\pagebreak

\section{Simulation Study}\label{sec:simu}
We perform a short Monte Carlo simulation by drawing $n$ iid samples from a $\mathrm{GPD}(\gamma,1)$ distribution. In particular, we are interested in the comparison of MSE, variance and bias for different choices of $n$ and $\gamma$. Thus, we vary $n=10,50,100$ and $\gamma\in\{0.05, 0.1, ...,0.6\}$. Each MSE (variance, bias) is approximated via 1000 repetitions of each simulation experiment. The results are depicted in Figure \ref{fig:simu_gpd}. 

It is notable that all MSEs are dominated by the variance. Due to the optimality in variance, the theoretical result is confirmed that the MLE outperforms the MDE in terms of MSE for most combinations of $\gamma$ and $n$. The trends in MSE of GPD parameters directly translate to trends in MISE of survival function estimation, shown in Figure \ref{fig:simu_gpd_sf}.

\begin{figure}[!htp]
    \centering
    \includegraphics[width=.95\linewidth]{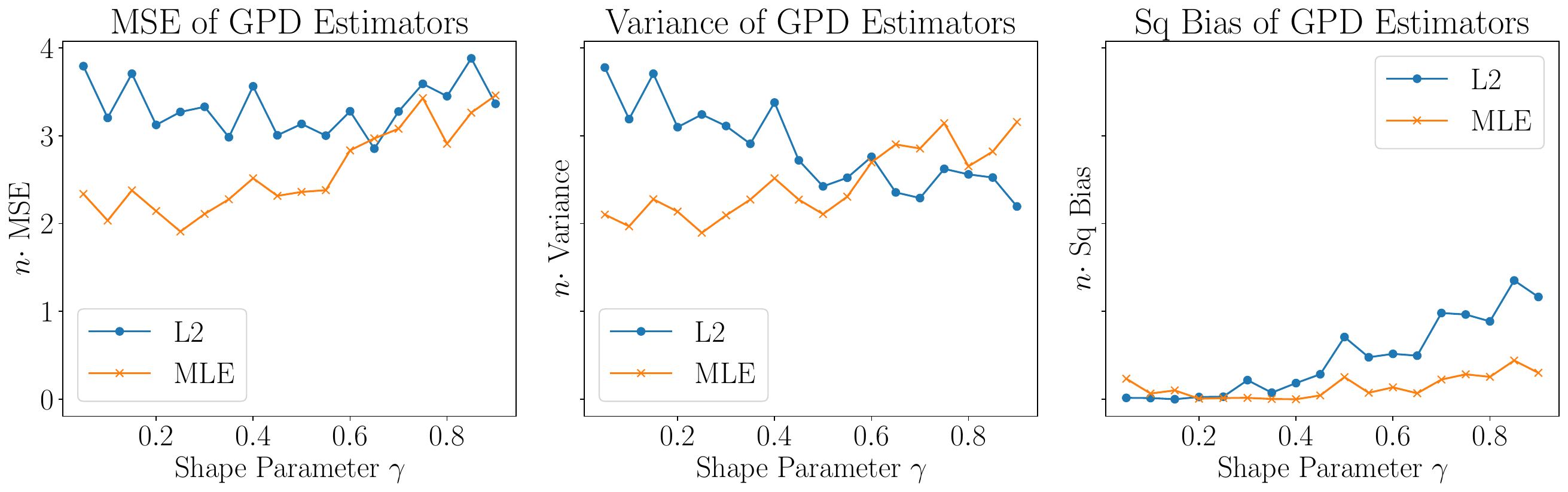}
    \includegraphics[width=.95\linewidth]{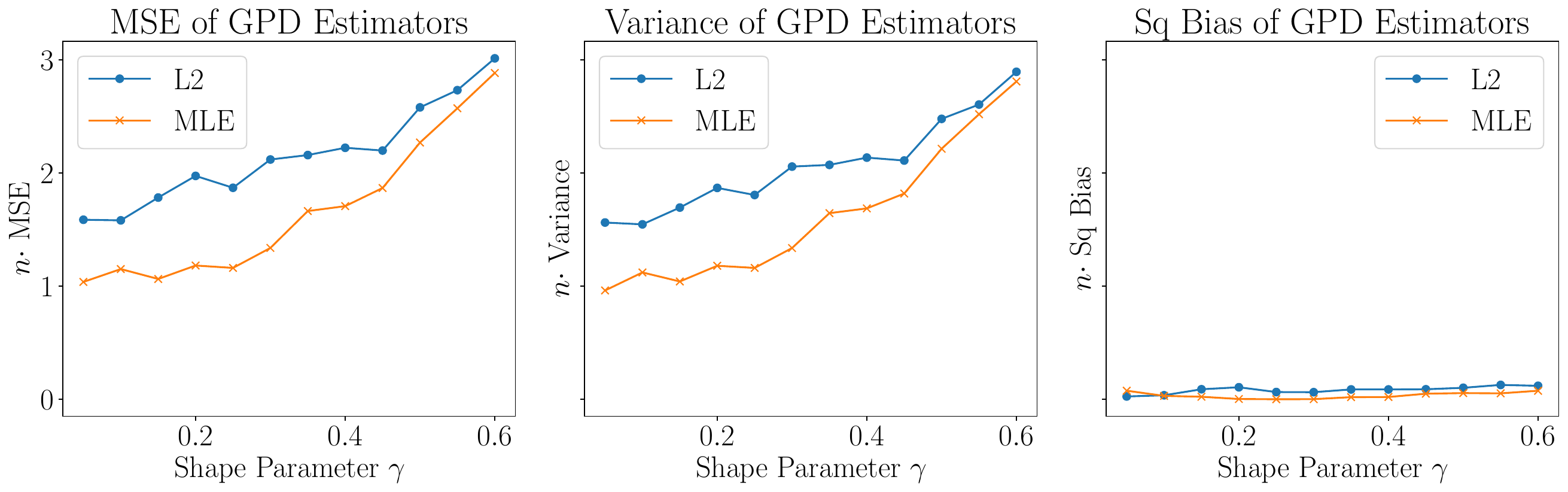}
    \includegraphics[width=.95\linewidth]{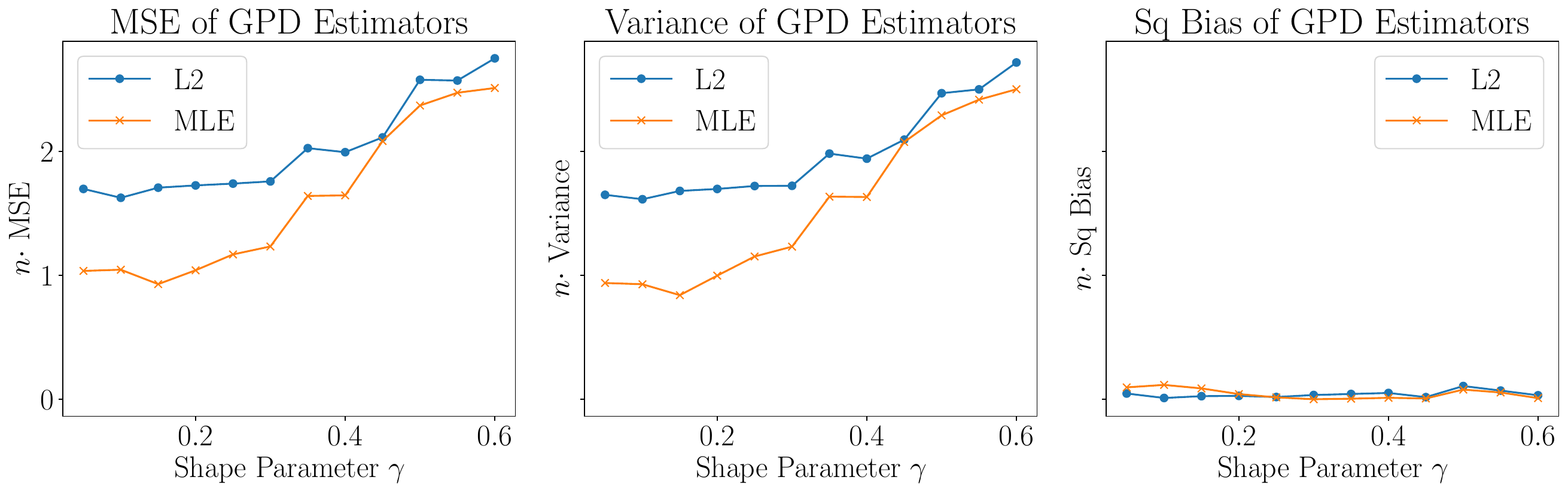}
    \caption{MSE, variance and squared bias for the shape parameter $\gamma$.  estimation. For each point, $n$ samples from a $\mathrm{GPD}(\gamma,1)$ have been drawn and the MSE was computed over 1000 repetitions. From top to bottom: $n=10,50,100$. We observe that all MSEs are variance-dominated and that MLE always outperforms MDE, but the MDE differences remain small.}
    \label{fig:simu_gpd}
\end{figure}

\begin{figure}[!htp]
    \centering
    \includegraphics[width=\linewidth]{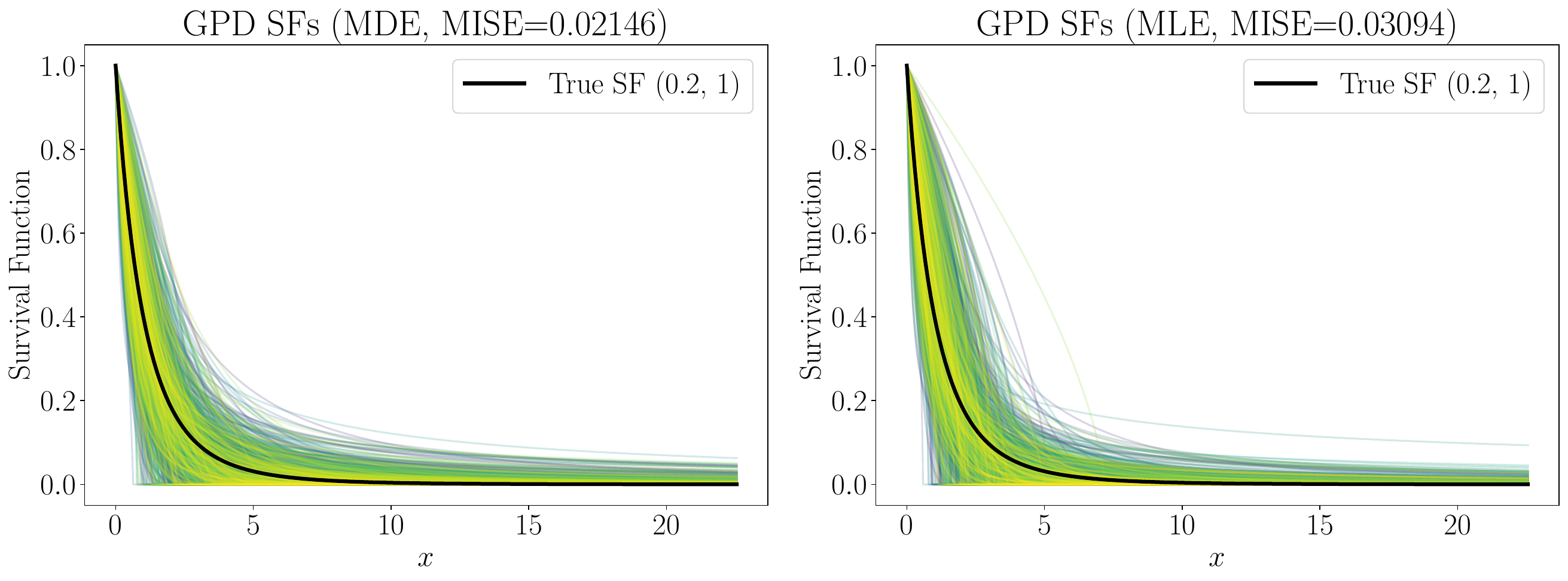}
    \includegraphics[width=\linewidth]{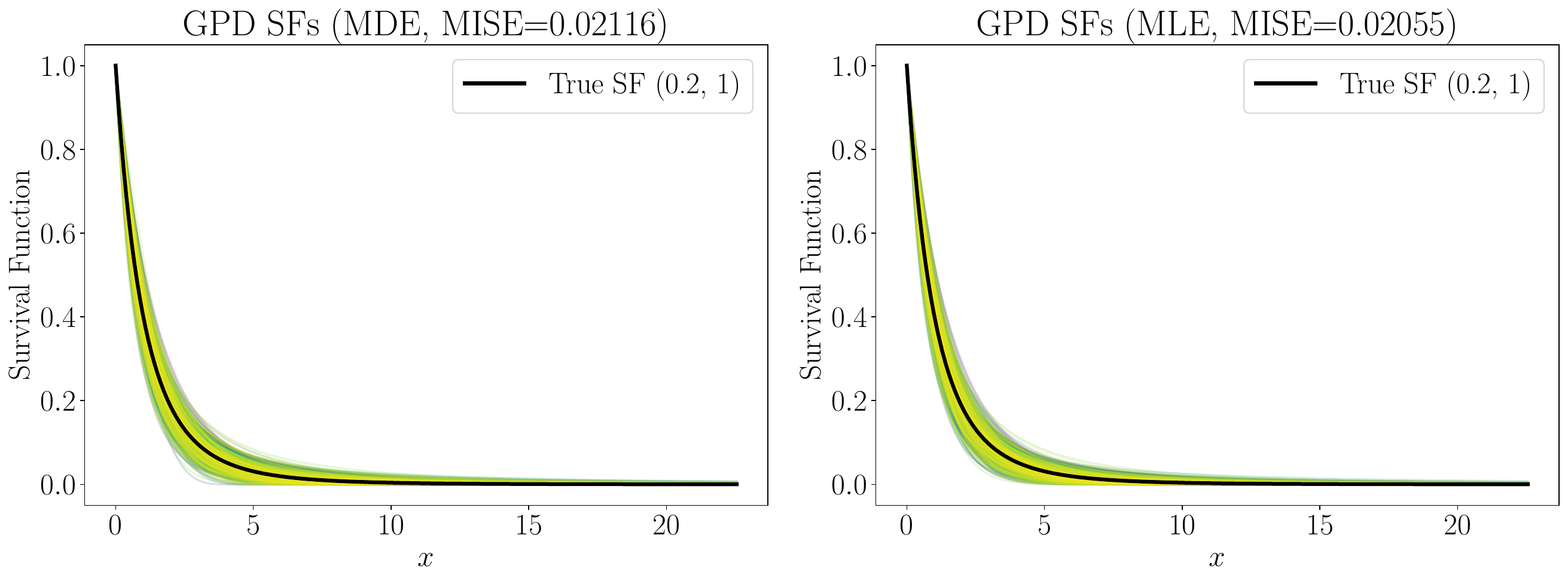}
    \caption{Estimated survival functions $S_{\hat\theta}(x)$ around true $S_{\theta_0}(x)$, $\theta_0=(0.2,1)$. Top: $n=10$, bottom: $n=100$. The MISE is computed over the shown domain on 1000 repetitions.}
    \label{fig:simu_gpd_sf}
\end{figure}

\clearpage

\newpage

\includepdf[
  pages=1,
  pagecommand={
    \section{Covariance Calculations with Mathematica}\label{app:mathematica}
  }
]{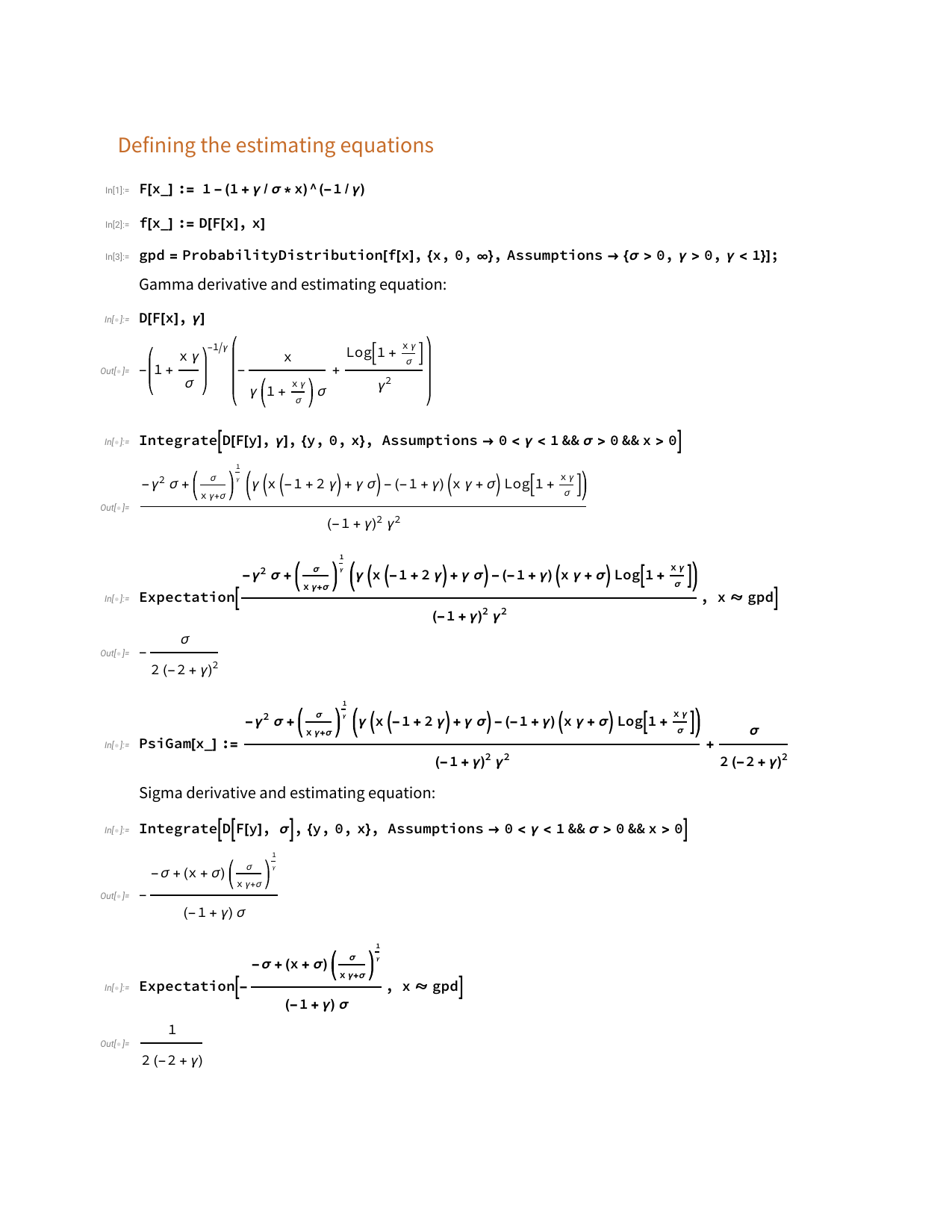}

\includepdf[
  pages=2-
]{normality.pdf}
\end{document}